\DeclarePairedDelimiter\abs{\lvert}{\rvert}
\pgfplotsset{compat=1.12}
\DeclarePairedDelimiter{\ceil}{\lceil}{\rceil}
\DeclarePairedDelimiter{\floor}{\lfloor}{\rfloor}
\theoremstyle{definition}
\newtheorem{theorem}{Theorem}
\newtheorem{lemma}{Lemma}
\newtheorem{claim}{Claim}
\newtheorem{proposition}{Proposition}
\newtheorem{corollary}{Corollary}
\newtheorem{example}{Example}
\newtheorem{remark}{Remark}
\newtheorem{definition}{Definition}
\def\BibTeX{{\rm B\kern-.05em{\sc i\kern-.025em b}\kern-.08em
    T\kern-.1667em\lower.7ex\hbox{E}\kern-.125emX}}
\renewcommand*\env@matrix[1][*\c@MaxMatrixCols c]{%
  \hskip -\arraycolsep
  \let\@ifnextchar\new@ifnextchar
  \array{#1}}
\newcommand{\cmark}{\ding{51}}%
\newcommand{\xmark}{\ding{55}}
\newcommand{\calM}{\mathcal{M}}
\newcommand{\calA}{\mathcal{A}}
\newcommand{\calD}{\mathcal{D}}
\newcommand{\calW}{\mathcal{W}}
\newcommand{\calX}{\mathcal{X}}
\newcommand{\calG}{\mathcal{G}}
\newcommand{\calU}{\mathcal{U}}
\newcommand{\calV}{\mathcal{V}}
\newcommand{\calN}{\mathcal{N}}
\newcommand{\calO}{\mathcal{O}}
\newcommand{\bfy}{\mathbf{y}}
\newcommand{\bfr}{\mathbf{r}}
\newcommand{\bfH}{\mathbf{H}}
\newcommand{\bfA}{\mathbf{A}}
\newcommand{\bfx}{\mathbf{x}}
\newcommand{\bfB}{\mathbf{B}}
\newcommand{\bfa}{\mathbf{a}}
\newcommand{\bfR}{\mathbf{R}}
\newcommand{\bfv}{\mathbf{v}}
\newcommand{\bfe}{\mathbf{e}}
\begin{document}

\title{Sparsity-Preserving Encodings for Straggler-Optimal Distributed Matrix Computations at the Edge}

\definecolor{mygr}{rgb}{0.6,0.4,0.0}
\definecolor{my1color}{rgb}{0.6,0.4,0.0}
\definecolor{mycolor1}{rgb}{0.00000,0.44700,0.74100}%
\definecolor{mycolor2}{rgb}{0.85000,0.32500,0.09800}%
\definecolor{mycolor3}{rgb}{0.45000,0.62500,0.19800}%
\tikzset{
block/.style    = {draw, thick, rectangle, minimum height = 2em, minimum width = 2em},
sum/.style      = {draw, circle, node distance = 1cm},
sum1/.style      = {draw, circle, minimum size = 1.1 cm},
input/.style    = {coordinate},
output/.style   = {coordinate},
}

\author{\IEEEauthorblockN{Anindya Bijoy Das, Aditya Ramamoorthy, David J. Love and Christopher G. Brinton} 
\thanks{Anindya Bijoy Das, David J. Love, and Christopher. G. Brinton are with the School of Electrical and Computer Engineering, Purdue University, West Lafayette, IN, USA 47907 (e-mail: \{das207, djlove, cgb\}@purdue.edu).}
\thanks{Aditya Ramamoorthy is with the Department of Electrical and Computer Engineering, Iowa State University, Ames, Iowa, USA 50011 (email: adityar@iastate.edu).}
\thanks{The material in this work has appeared in part \cite{10313473} at the 59th Annual Allerton Conference on Communication, Control, and Computing (Allerton), Monticello, IL, USA, 2023, [DOI: 10.1109/Allerton58177.2023.10313473].}
\thanks{This work was supported in part by the National Science Foundation (NSF) under Grants CNS-2212565, ITE-2326898 and CCF-2115200, and Office of Naval Research (ONR) under grant N000142112472.}
}

\IEEEtitleabstractindextext{%

\begin{abstract}
Matrix computations are a fundamental building-block of edge computing systems, with a major recent uptick in demand due to their use in AI/ML training and inference procedures. Existing approaches for distributing matrix computations involve allocating coded combinations of submatrices to worker nodes, to build resilience to slower nodes, called stragglers. In the edge learning context, however, these approaches will compromise sparsity properties that are often present in the original matrices found at the edge server. In this study, we consider the challenge of augmenting such approaches to preserve input sparsity when distributing the task across edge devices, thereby retaining the associated computational efficiency enhancements. First, we find a lower bound on the weight of coding, i.e., the number of submatrices to be combined to obtain coded submatrices, to provide the resilience to the maximum possible number of straggler devices (for given number of devices and their storage constraints). Next we propose distributed matrix computation schemes which meet the exact lower bound on the weight of the coding. Numerical experiments conducted in Amazon Web Services (AWS) validate our assertions regarding straggler mitigation and computation speed for sparse matrices.
\end{abstract}

\begin{IEEEkeywords}
Distributed computing, MDS Codes, Stragglers, Sparsity, IoT/edge heterogeneity
 \end{IEEEkeywords}
}

\maketitle
\IEEEdisplaynontitleabstractindextext
\IEEEpeerreviewmaketitle

\vspace{-0.1 cm}
\section{Introduction}
\label{sec:intro}
Edge computing platforms are constantly struggling to keep pace with the escalating demands for data processing tasks. A vast volume of data is being generated at the network edge today due to the rise of the Internet of Things (IoT), which encompasses self-driving cars, drones, health monitoring devices, and many other intelligent applications \cite{9517801}. 
The mounting complexity of edge learning tasks, exemplified by the constantly growing sizes of deep neural networks and other AI/ML models, coupled with vast amounts of data used in training and inference, persistently obstruct scalability. 
As these models become more sophisticated, they necessitate increasingly robust computational resources and capabilities. 

A possible solution to this can be borrowed from distributed computing systems, where computationally intensive tasks are distributed across multiple worker nodes. In the edge computing context, an oversubscribed edge server can allocate a processing task across multiple edge/IoT devices that has access to. Specifically, a profoundly-complex learning task can undergo partitioning into numerous sub-learning tasks, which are subsequently dispatched to multiple edge devices for execution \cite{9563062}. Thus, the computational burden at the edge server can be parallelized, enhancing training speed.

In the edge learning context, particular attention must be paid to \textit{matrix computations}, which form the cornerstone of numerous data processing tasks in AI/ML. With the expansion of data sizes, these computations encompass high-dimensional matrices, leading to extended runtimes with all else held constant. Distributed edge computations will aim to mitigate this by segmenting the entire matrix operation into smaller submatrices, and dispersing them across multiple edge devices for parallel execution. Nonetheless, the overall execution time of a task can be markedly impacted by slower or failed edge devices, often referred to as ``stragglers'' \cite{ramamoorthyDTMag20}. 

Stragglers are prevalent across the edge due to different reasons including heterogeneous computation capabilities found across devices  \cite{ramamoorthyDTMag20, hosseinalipour2020federated} and network congestion \cite{10529949, 9174120}, and can pose challenges to completing edge learning tasks in a timely manner.
Since the system must wait for the slowest task to finish before proceeding, these are particularly problematic in environments where tasks are expected to complete synchronously. Scenarios affected by stragglers encompass various real-world applications in federated learning \cite{9252954, 9024521, das2023jsait_submitted}, blockchain \cite{9397781, 9170905, 10159020}, timely computing \cite{8849235, 9115646}, distributed optimization \cite{karakus2017straggler, 9248056, 8007058, 8262879},  smart healthcare \cite{9878261, 10018935} and other similar environments. Thus, mitigating the stragglers is essential for optimizing the synchronization and speed of task completion in such cases.


Several coding theory techniques \cite{lee2018speeding, das2019random, dutta2016short, yu2017polynomial, c3les, yu2020straggler, tandon2017gradient, dasunifiedtreatment,  9513242, das2023distributed, 8849468, 8919859} have been recently proposed to mitigate the effect of stragglers. A simple example \cite{lee2018speeding} for computing $\bfA^T \bfx$ using three  devices involves partitioning the matrix $\bfA$ into two block-columns, denoted as $\bfA = [\bfA_0 | \bfA_1]$. The edge devices are then assigned specific tasks: one computes $\bfA_0^T \bfx$, another computes $\bfA_1^T \bfx$, and the third computes $(\bfA_0 + \bfA_1)^T \bfx$. Each device then handles only half of the computational load, the system can recover $\bfA^T \bfx$ if any two out of the devices return their results. 
This implies that the system can withstand the failure or delay of a single straggler. In broader terms, the recovery threshold stands as a pivotal metric, denoting the minimum number of edge devices ($\tau$) needed to fulfill their tasks, thereby facilitating the recovery of $\bfA^T \bfx$ from any subset of $\tau$ devices. Note that this idea can be extended to matrix-matrix multiplication also.

While numerous existing works achieve the optimal recovery threshold \cite{yu2017polynomial, 8849468, 8919859, das2019random} for specific device and storage constraints, they are not exempt from limitations. Real-world datasets, pervasive across various fields like optimization, deep learning, power systems, and computational fluid dynamics, frequently manifest as sparse matrices. Leveraging this sparsity effectively holds the potential to significantly reduce the overall time required for matrix computations. However, techniques relying on MDS codes tend to generate dense linear combinations of submatrices \cite{yu2017polynomial, 8849468, 8919859, das2019random}, thereby obliterating the inherent sparsity within the matrix structure. 
Consequently, this can result in a severe reduction in the computational speed of edge devices \cite{wang2018coded}. In addition, this can lead to further delay, since the central server needs to transmit a larger number of non-zero entries to the edge devices. In this paper, our primary objective is to devise methodologies that integrate a relatively modest number of submatrices while preserving an optimal recovery threshold. This can enhance the computation speed and reduce the transmission delay, thereby improving the overall job completion speed.

\textbf{Organization:} In this work, first we formulate the problem, provide a literature background and summarize our contributions in Sec. \ref{sec:back}. Then, in Sec. \ref{sec:opt_weight}, we find a lower bound on the number of submatrices to be combined for coded submatrices that will provide resilience to the maximum number of stragglers in a system with given storage and computation constraints. After that, we develop  novel approaches for distributed matrix-vector multiplication (Sec. \ref{sec:prop_approach}) and distributed matrix-matrix multiplication (Sec. \ref{sec:matmat}), both of which meet that lower bound, maximizing sparsity preservation while providing resilience to the maximum number of stragglers. 
Next, we carry out experiments on an Amazon Web Services (AWS) cluster, and  provide the numerical results in Sec. \ref{sec:numexp}. Finally, Sec. \ref{sec:conclusion} concludes the paper with several possible future directions.

\section{Problem Formulation, Background and Summary of Contributions}
\label{sec:back}
In this section, we initially outline the problem of addressing sparsity in distributed matrix computations. Following that, we provide a concise overview of existing methods, analyze their limitations, and outline the key contributions of our research.
\vspace{-0.2 cm}

\subsection{Problem Formulation}
In this work, we investigate a distributed computing system comprised of a edge server and a set of $n$ edge devices. The goal is to compute $\mathbf{A}^T \mathbf{x}$ for matrix-vector multiplication or $\mathbf{A}^T \mathbf{B}$ for matrix-matrix multiplication. Here, $\mathbf{A} \in \mathbb{R}^{t \times r}$, $\mathbf{B} \in \mathbb{R}^{t \times w}$ are the ``input'' matrices, and $\mathbf{x} \in \mathbb{R}^{t}$ is a vector.

First, we consider a coded matrix-vector multiplication scheme where The primary objective of this system is to calculate the product $\bfA^T \bfx$, where $\bfA$ represents a sparse matrix and $\bfx$ denotes a vector. In line with previous approaches, we initially partition matrix $\bfA$ into $k_A$ distinct block-columns, $\bfA_0, \bfA_1, \bfA_2, \dots, \bfA_{k_A - 1}$; hence the goal is to recover $k_A$ corresponding unknowns $\bfA_0^T \bfx, \bfA_1^T \bfx, \bfA_2^T \bfx, \dots, \bfA_{k_A - 1}^T \bfx$.

Next, we consider the matrix-matrix multiplication case, where matrices $\bfA$ and $\bfB$ are partitioned into $k_A$ and $k_B$ disjoint block-columns, as $\bfA_0, \bfA_1, \bfA_2, \dots, \bfA_{k_A - 1}$ and $\bfB_0, \bfB_1, \bfB_2, \dots, \bfB_{k_B - 1}$. Thus, in this case we need to recover, in total, $k_A k_B$ unknowns in the form of $\bfA_i^T \bfB_j$ where $0 \leq i \leq k_A - 1$ and $0 \leq j \leq k_B - 1$.


The assumption is that the edge devices possess uniform memory capacity and computational speed. To elaborate, each worker can retain $\gamma_A = \frac{1}{k_A}$ portion of matrix $\bfA$, as well as the complete vector $\bfx$ (in the matrix-vector case), or $\gamma_B = \frac{1}{k_B}$ portion of matrix $\bfB$ (in the matrix-matrix case). In real-world scenarios, stragglers might emerge owing to discrepancies in computational speeds or instances of failure among designated edge devices at particular times \cite{das2019random}.

In this work, we allocate to each edge device a randomized linear combination of specific block-columns from matrix $\bfA$, along with the vector $\bfx$ in the matrix-vector case. For matrix-matrix multiplication, each edge device receives a randomized linear combination of certain block-columns from $\bfA$ and another combination from $\bfB$. However, as discussed in Section \ref{sec:intro}, assigning dense linear combinations risks losing the inherent sparsity of the matrices involved. To circumvent this, our aim is to distribute linear combinations involving fewer submatrices \cite{dasunifiedtreatment, das2023distributed}. To quantify this strategy, we introduce the notion of ``weight'', which plays a pivotal role in handling sparse matrices in distributed matrix computations.
\vspace{-0.05 cm}

\begin{definition}
\label{def:weight}
We define the ``weight'' $(\omega)$ for a coded matrix-computation scheme as the number of unknowns that participate within any matrix product computed by each of the edge device. Thus, if we combine $\omega_A$ submatrices of $\bfA$ to obtain the encoded submatrices for matrix-vector multiplication, we have $\omega = \omega_A$. In the matrix-matrix case, if we combine $\omega_A$ and $\omega_B$ submatrices of $\bfA$ and $\bfB$, respectively, to obtain the assigned encoded submatrices, then $\omega = \omega_A \omega_B$.   
\end{definition}

\vspace{-0.1 cm}
Thus, our goal is to obtain the optimal recovery threshold ($\tau = k_A$ for the matrix-vector case, and $\tau = k_A k_B$ for the matrix-matrix case \cite{yu2017polynomial}) while maintaining $\omega$ as low as possible. 

\vspace{-0.2 cm}

\subsection{Existing Methods and Our Motivations}
\label{sec:background}
Numerous coded computation schemes have been recently proposed for distributed matrix computations \cite{lee2018speeding, xhemrishi2022distributed, das2019random, dutta2016short, yu2017polynomial, das2020coded, hollanti2022secure, yu2020straggler, tandon2017gradient, aliasgari2020private, 8849468, 8919859, dasunifiedtreatment,  9513242, tandon2018secure, mallick2018rateless, 8849395}. 
In this section, we provide an overview of existing algorithms and examine their limitations, which have inspired our current research efforts. Note that there are methods in \cite{mallick2018rateless, 8849395, xhemrishi2022distributed, keshtkarjahromi2018dynamic} which are developed for matrix-vector multiplication only. In this work, in addition to the matrix-vector case, we also address the distributed matrix-matrix multiplication scenario, the more challenging one.

The initial studies in this domain focused on the recovery threshold as a pivotal metric. Unlike the methodologies outlined in \cite{wang2018coded, mallick2018rateless}, which exhibit suboptimal straggler resilience, the polynomial code approach emerges as one of the pioneering methods to achieve the optimal recovery threshold. We begin by illustrating a simplified example of this approach \cite{yu2017polynomial} in the context of distributed matrix-matrix multiplication.

Let us consider a system consisting of $n = 5$ edge devices, each capable of storing half of each of the matrices $\bfA$ and $\bfB$, hence $1/k_A = 1/k_B = 1/2$. We partition $\bfA$ and $\bfB$ into $k_A = k_B = 2$ block-columns each, denoted as $\bfA_0$, $\bfA_1$ and $\bfB_0$, $\bfB_1$, respectively. Defining matrix polynomials $\bfA(z) = \bfA_0 + \bfA_1 z$ and $\bfB(z) = \bfB_0 + \bfB_1 z^2$, we can write the product of these two matrix polynomials as $\bfA^T (z) \bfB(z) = \bfA_0^T \bfB_0 + \bfA_1^T \bfB_0 z + \bfA_0^T \bfB_1 z^2 + \bfA_1^T \bfB_1 z^3$.
The edge server assesses $\bfA(z)$ and $\bfB(z)$ at $n = 5$ distinct real numbers, and then, transmits the corresponding evaluated matrices to edge device $W_i$, where $0\leq i \leq n - 1$. Each device then computes its designated matrix-matrix block-product and sends back the result to the server. Given that $\bfA^T (z) \bfB(z)$ forms a degree-$3$ polynomial, upon receiving results from the fastest $\tau = 4$ devices, the server can decode all coefficients within $\bfA^T (z) \bfB(z)$, thereby obtaining $\bfA^T \bfB$ (since any $4\times 4$ submatrix of a $5\times 4$ Vandermonde matrix has a rank $4$). Hence, the recovery threshold is $\tau = 4$, indicating  resilience to $s = 1$ straggler.

For given storage constraints, i.e., storing $1/k_A$ and $1/k_B$ fractions of $\bfA$ and $\bfB$, respectively, if each device is responsible to compute $1/k$ fraction of the overall job (where $k = k_A k_B$), then $s = n - k_A k_B$ is the maximum number of stragglers that any scheme can be resilient to \cite{yu2017polynomial}. This can be an important property of a coded computation scheme, since resilience to a high number of stragglers is often crucial. For example, consider smart city IoT environments, where distributed learning is often incorporated within edge computing systems to provide intelligent, data-driven services \cite{9099242, 7973152}. Quality of Service (QoS) for these smart city use-cases have been shown to greatly improve through computation offloading, and thus highly depends on the system being resilient to slower devices.

While the polynomial code meets the lower bound on the recovery threshold, recent works on matrix computations have identified metrics beyond recovery threshold that also need to be considered. Table \ref{tab:compare} demonstrates an overall summary to compare available schemes
in terms of different other metrics. Here we discuss the importance of factoring them into our methodology which aims at improving those other metrics along with enjoying the optimal recovery threshold.

\begin{table*}[t]
\caption{{\small Comparison among existing works on coded matrix-computations (the approach in \cite{dutta2019optimal} involves a higher  computational complexity). However, the methods in \cite{mallick2018rateless, 8849395, xhemrishi2022distributed, keshtkarjahromi2018dynamic} are developed for matrix-vector multiplication only, and therefore, are not included in the comparison. Note that the method in \cite{das2023distributed} does not develop and meet the lower bound on the encoding weight, which is one of our contributions.}} 
\vspace{-0.2 cm}
\label{tab:compare}
\begin{center}
\begin{small}
\begin{sc}
\begin{tabular}{c c c c c}
\hline
\toprule
\multirow{2}{1 cm}{Codes} & Optimal & Numerical  & Sparsely & Heterogeneous \\
   & Threshold? & Stability? & Coded? & System?\\
 \midrule
Repetition Codes  & \xmark & \cmark & \cmark& \cmark\\ \hline
Product Codes \cite{lee2017high}, Factored Codes \cite{9513242}    &\xmark & \cmark   & \xmark& \xmark\\ \hline
Polynomial Codes \cite{yu2017polynomial}  & \cmark & \xmark  & \xmark& \cmark\\ \hline
MatDot Codes \cite{dutta2019optimal}  & \cmark & \xmark  & \xmark& \cmark\\ \hline
Orthogonal Polynomial  \cite{8849468}, RKRP code\cite{8919859},& {\cmark} & {\cmark}  & {\xmark}& {\cmark}\\ \hline
Bivariate  Polynomial Code \cite{9519610}  & \cmark & \xmark &  \xmark& \cmark\\ \hline
Convolutional Code \cite{das2019random}, Circular Rot. Matrix \cite{ramamoorthy2019numerically} & {\cmark} & {\cmark}  & {\xmark}& {\cmark}\\ \hline
$\beta$-level Coding \cite{das2020coded} & \xmark & \cmark  & \cmark& \xmark\\ \hline
Sparsely Coded Straggler Optimal Scheme \cite{das2020coded} & \cmark & \cmark  & \cmark& \xmark\\ \hline
Class-based Scheme \cite{dasunifiedtreatment} & \cmark & \cmark  & \cmark& \xmark\\ \hline
Cyclic Code (with random coefficients) \cite{das2023distributed} & \cmark & \cmark  & \cmark & \cmark\\ \hline
{\bf Proposed Scheme} & \cmark & \cmark  & \cmark & \cmark\\
\bottomrule
\end{tabular}
\end{sc}
\end{small}
\end{center}
\vspace{-0.5 cm}
\end{table*}%

\vspace{0.05 in}
{\bf Sparse Matrices:}
Sparse matrices are ubiquitous across various domains like optimization, deep learning, and electromagnetism, as evidenced by their prevalence in real-world datasets (see \cite{sparsematrices} for examples). Essentially, many practical scenarios involve matrices with sparse elements, offering a potential opportunity to reduce the edge device computation time significantly \cite{wang2018coded}. For example, for anomalous defect elimination in intelligent manufacturing, sparsity of the normal features can be very beneficial \cite{9887911, 9784867}. This can significantly enhance the speed of the overall training process, which often involves deep neural networks \cite{9475965, 9214506}.

Consider two column vectors, $\bfa$ and $\bfy$, both of length $m$, where $\bfa$ contains roughly $\psi m$ non-zero entries ($0 < \psi << 1$). Computing $\bfa^T \bfy$ consumes about $2 \psi m$ floating-point operations (FLOPs), in contrast to approximately $2m$ FLOPs which would be required for a dense vector $\bfa$. In a similar manner, the dense linear encoding strategies in  \cite{yu2017polynomial, 8849468, 8919859, das2019random} inflate the number of non-zero entries in encoded matrices, thus forfeit sparsity preservation. For a system with storage coefficients $1/k_A$ and $1/k_B$, the polynomial coding scheme \cite{yu2017polynomial} and its derivatives \cite{8849468} obtain the encoded submatrices of $\bfA$ and $\bfB$ by having linear combinations of $k_A$ and $k_B$ submatrices, respectively. Consequently, non-zero entries can increase by up to $k_A$ and $k_B$ times, respectively, compared to the original matrices, substantially elongating computation time. This underscores the necessity for schemes that minimize the fusion of uncoded submatrices.

Note that there are various methods in the literature \cite{wang2018coded, das2020coded, dasunifiedtreatment, xhemrishi2022distributed, 9965842} that highlight and leverage the inherent sparsity of the ``input'' matrices. However, they have other limitations. For example, the approach detailed in \cite{xhemrishi2022distributed} is only applicable to the matrix-vector case; does not apply to matrix-matrix multiplication. The assumptions in \cite{9965842} differ from ours as they involve the edge server in some matrix computations. Furthermore, the approach in \cite{wang2018coded} and the $\beta$-level coding scheme in \cite{das2020coded} do not meet the optimal recovery threshold, necessitating more devices to complete their tasks. 
On the other hand, the straggler optimal approaches in \cite{das2020coded, dasunifiedtreatment} assign multiple tasks to each edge device, some of which can be quite densely coded, leading to higher computation delay. The approach in \cite{das2023distributed} also addresses the sparsity issue, but it does not make any guarantees in terms of a theoretical lower bound on the encoding weights. This implies that an enhanced coding approach could further streamline computational speed beyond these techniques for sparse matrices. We provide theoretical guarantees for our proposed coding method in Sec. \ref{sec:opt_weight}.

\vspace{0.05 in}
{\bf Numerical Stability:} The numerical stability of the system stands as another crucial concern. Given that the encoding and decoding techniques in coded computation function within the real field, decoding the unknowns from a  system of equations may lead to highly inaccurate results if the corresponding system matrix is ill-conditioned. Round-off errors could magnify in the decoded outcome due to the elevated condition numbers of the decoding matrices. For instance, the polynomial code method outlined in \cite{yu2017polynomial} integrates Vandermonde matrices into the encoding process, known for their ill-conditioned nature. Literature addressing this challenge \cite{8849468, 8919859, das2019random} underscores the significance of minimizing the worst-case condition number $(\kappa_{worst})$ across different choices of stragglers.

However, several numerically stable methods rely on random codes \cite{8919859, das2019random, das2020coded, dasunifiedtreatment}, necessitating substantial time investment to find an optimal set of random coefficients to ensure the numerical stability of the system. This typically involves generating a set of random coefficients initially and assessing $\kappa_{worst}$ across all straggler permutations. This step iterates several times (e.g., $20$), retaining the set of coefficients yielding the minimum $\kappa_{worst}$. However, the latency introduced by this iterative process escalates with the number of edge devices, potentially causing delays in the encoding process. For example, the sparsely coded approaches in \cite{das2020coded, dasunifiedtreatment} involve significant delay for determining a ``good'' set of coefficients, as demonstrated numerically in Sec. \ref{sec:numexp}.

\subsection{Summary of Contributions}
\label{sec:summary}
\begin{itemize}
\item We address the straggler issue in distributed matrix computations for edge learning tasks, specifically focusing on sparse matrices. We define the concept of ``weight'' and determine its lower bound to ensure resilience against the maximum number of stragglers, maximizing the system's tolerance to delays or inefficiencies. 

\item Then, we develop an algorithm (Alg. \ref{Alg:New_matvec}) for distributed matrix-vector multiplication, which meets the exact lower bound to build resilience against maximum number of stragglers for any number of edge devices ($n$) and for any given storage constraint. 

\item Next, we develop an algorithm (Alg. \ref{Alg:New_matmat}) for straggler-resilient distributed matrix-matrix multiplication, which also meets the lower bound on the weight for different values of $n, k_A$ and $k_B$. Both of Alg. \ref{Alg:New_matvec} and Alg. \ref{Alg:New_matmat} can be extended to a heterogeneous setting where the edge devices can have different storage and computation abilities.

\item Subsequently, we analyze the per edge device computational complexity for our algorithms, and show that our approaches involve lower computational complexity than other recent approaches \cite{dasunifiedtreatment, das2023distributed}. In addition, we show that our approach also outperforms the approaches in \cite{das2020coded, dasunifiedtreatment} requiring less time to find the encoding coefficients for the numerical stability of the system.

\item Finally, we conduct extensive numerical experiments in the Amazon Web Services (AWS) cluster. Our results confirm the superiority of our approach to different dense coded approaches \cite{yu2017polynomial, 8849468, 8919859}, and to the recent approaches developed specifically for sparse matrices \cite{das2023distributed, dasunifiedtreatment}.
\end{itemize}

 Note that a preliminary version of this paper appeared in \cite{10313473}. Compared to the conference version, we have (i) developed Alg. \ref{Alg:New_matmat} for the more challenging case: matrix-matrix multiplication, (ii) proved necessary theorems for straggler-resilience, and (iii) conducted the corresponding numerical simulations.

\section{Minimum Weight of Encoding}
\label{sec:opt_weight}

We assume homogeneous weights, i.e., the same number ($\omega_A$) of uncoded $\bfA$ (and $\omega_B$ for $\bfB$) submatrices are combined to obtain the encoded $\bfA$ (and $\bfB$) submatrices. If we define $\omega$ as the number of participating unknowns in any computed results obtained from any edge device, in matrix-vector multiplication, we have $\omega = \omega_A$ and in the matrix-matrix case, $\omega = \omega_A \omega_B$. Now we state the following proposition which provides a lower bound on $\omega$ for any coded matrix computation scheme with resilience to $s = n - k$ stragglers, where $k = k_A$ for the matrix-vector case and $k = k_A k_B$ for the matrix-matrix case.

\begin{proposition}
\label{prop:lowerbound}
Consider a distributed computation system of $n$ total devices each of which can store $1/k_A$ fraction of matrix $\bfA$ (and $1/k_B$ fraction of matrix $\bfB$ for the matrix-matrix multiplication case). Now, assume that a coded matrix computation scheme aims at resilience to $s = n - k$ stragglers out of $n$ devices, where $k = k_A$ for the matrix-vector case and $k = k_A k_B$ for the matrix-matrix case. Any such scheme that partitions $\bfA$ into $k_A$ (and $\bfB$ into $k_B$) disjoint block-columns has to maintain a minimum homogeneous value for $\omega$, which is given by $\hat{\omega} = \lceil{\frac{(n-s)(s+1)}{n}}\rceil$.
\end{proposition}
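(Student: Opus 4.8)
The plan is to translate the straggler-resilience requirement into a combinatorial covering condition on the placement of uncoded submatrices across the $n$ devices, and then extract the bound $\hat\omega = \lceil (n-s)(s+1)/n \rceil$ by a simple counting (averaging) argument. I will treat the matrix-vector case first, where $k = k_A$ and $\omega = \omega_A$; the matrix-matrix case follows by the same argument after observing that each device effectively works with $\omega = \omega_A\omega_B$ of the $k_Ak_B$ product-unknowns, so one can relabel and reuse the scalar count.

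First, I would fix the key structural fact: resilience to $s = n-k$ stragglers means that \emph{every} set of $k$ devices must allow recovery of all $k$ unknowns $\bfA_0^T\bfx,\dots,\bfA_{k-1}^T\bfx$. In particular, for any unknown $\bfA_j^T\bfx$ and any choice of $s$ devices declared as stragglers, the unknown must still be recoverable from the remaining $k$ devices. If some unknown $\bfA_j$ appeared (with nonzero coefficient) in the encoded submatrix of at most $s$ devices, an adversary could delete exactly those devices, leaving $\bfA_j$ absent from all surviving encodings — hence unrecoverable (generic coefficients cannot conjure a variable that is not present). Therefore each of the $k$ unknowns must appear in at least $s+1$ of the $n$ devices' encoded $\bfA$-submatrices. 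This is the crucial observation, and I expect establishing it rigorously — i.e., arguing that "not present in any surviving equation" genuinely forecloses recovery, even allowing arbitrary (random/real) coefficients and arbitrary decoding — is the main obstacle; it needs a clean linear-algebraic statement that the recovered quantities lie in the span of the submatrices actually combined.

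Next comes the counting step. Each device stores a combination of exactly $\omega_A$ uncoded submatrices of $\bfA$ (homogeneous weight assumption), so the total number of (device, unknown) incidences is exactly $n\,\omega_A$. On the other hand, by the previous paragraph this incidence count is at least $k(s+1) = (n-s)(s+1)$. Hence $n\,\omega_A \ge (n-s)(s+1)$, giving $\omega_A \ge (n-s)(s+1)/n$, and since $\omega_A$ is an integer, $\omega_A \ge \lceil (n-s)(s+1)/n\rceil = \hat\omega$. For the matrix-matrix case I would note that each device computes a product of an $\omega_A$-term combination of $\bfA$-submatrices with an $\omega_B$-term combination of $\bfB$-submatrices, so it touches exactly $\omega_A\omega_B = \omega$ of the $k_Ak_B$ product-unknowns $\bfA_i^T\bfB_j$; the same adversarial-deletion argument forces each of these $k = k_Ak_B$ unknowns to appear in at least $s+1$ devices, and the identical averaging bound yields $\omega \ge \lceil (n-s)(s+1)/n \rceil$.

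I would close by remarking that the bound is stated for homogeneous weights, which is exactly what makes the incidence count $n\omega$ exact rather than merely an upper bound, so no slack is lost beyond the ceiling; and that $n - s = k$ here, so the statement can equivalently be read as $\hat\omega = \lceil k(s+1)/n\rceil$, which is the form most convenient for verifying that the constructions in Sections \ref{sec:prop_approach} and \ref{sec:matmat} are optimal.
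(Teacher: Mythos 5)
Your proposal is correct and follows essentially the same argument as the paper: each of the $k$ unknowns must appear in at least $s+1$ devices (else an adversary deletes exactly those devices and the unknown is unrecoverable), and the double count $n\omega \geq k(s+1)$ with the homogeneous-weight assumption yields $\hat{\omega} = \lceil (n-s)(s+1)/n \rceil$. Your added remark on making the ``absent from all surviving equations implies unrecoverable'' step rigorous is a fair observation, but the paper treats it as immediate just as you ultimately do.
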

\begin{proof}
Since the scheme aims at resilience to {\it any} $s$ stragglers, any scheme needs to ensure the presence of each of the corresponding $k$ unknowns in at least $s+1$ different devices. In other words, for the matrix-vector case, each of $k = k_A$ such $\bfA_i$'s (and for the matrix-matrix case, each of $k = k_A k_B$ such $\bfA_i^T \bfB_j$'s) has to participate within the encoded submatrices in at least $s+1$ different devices. Thus, the sum of the required number of appearances of all the uncoded unknowns is $k(s+1)$.

Now, we assume homogeneous $\omega$, i.e., each of these $n$ devices is assigned a linear combination of $\omega$ uncoded unknowns. Hence, the total number of appearances of all the uncoded unknowns over the set of the worker nodes is $n \omega$.

Therefore, to build the resilience to {\it any} $s$ stragglers, we need to have, $n \, \omega \geq k (s+1)$, hence, $\omega \geq {\frac{(n-s)(s+1)}{n}}$.Thus, the minimum weight $\hat{\omega} = \Bigl\lceil{\frac{(n-s)(s+1)}{n}}\Bigr\rceil$.
\end{proof}

Now we state a corollary which considers different values of $k_A$ in terms of $s$, and provides the corresponding optimal weights for coded sparse matrix-vector multiplication.

\begin{corollary} 
\label{cor:lowerbounds}
Consider the same setting as Prop. \ref{prop:lowerbound} for coded matrix computations. Now, we have the following cases.
\begin{itemize}
    \item (i) If $k > s^2$, then $\hat{\omega} = s + 1$.
    \item (ii) If $s \leq k \leq s^2$, then $\ceil{\frac{s+1}{2}} \leq \hat{\omega} \leq s$.
\end{itemize}
\end{corollary}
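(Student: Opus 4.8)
The plan is to derive both cases directly from the closed-form lower bound $\hat{\omega} = \lceil \frac{(n-s)(s+1)}{n} \rceil = \lceil \frac{k(s+1)}{n} \rceil$ established in Proposition \ref{prop:lowerbound}, where in the matrix-vector setting $k = k_A$ and $n = k + s$. The idea is simply to substitute $n = k+s$ into the fraction and then bound $\frac{k(s+1)}{k+s}$ from above and below in terms of $s$ alone, using the hypothesis on how $k$ compares to $s$ and $s^2$.

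For case (i), where $k > s^2$, I would first rewrite $\frac{k(s+1)}{k+s} = (s+1) - \frac{s(s+1)}{k+s}$. Since $k > s^2$ we have $k + s > s^2 + s = s(s+1)$, so the subtracted term $\frac{s(s+1)}{k+s}$ is strictly less than $1$; it is also strictly positive (assuming $s \geq 1$; the $s=0$ case is trivial, giving $\hat\omega = 1$). Hence $\frac{k(s+1)}{k+s}$ lies strictly between $s$ and $s+1$, and its ceiling is exactly $s+1$, giving $\hat{\omega} = s+1$.

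For case (ii), where $s \leq k \leq s^2$, I would bound the same quantity $\frac{k(s+1)}{k+s}$ on the interval $k \in [s, s^2]$. The function $k \mapsto \frac{k(s+1)}{k+s}$ is increasing in $k$ (its derivative in $k$ is $\frac{s(s+1)}{(k+s)^2} > 0$), so it attains its extremes at the endpoints. At $k = s$ it equals $\frac{s(s+1)}{2s} = \frac{s+1}{2}$, and at $k = s^2$ it equals $\frac{s^2(s+1)}{s^2+s} = \frac{s^2(s+1)}{s(s+1)} = s$. Therefore $\frac{s+1}{2} \leq \frac{k(s+1)}{k+s} \leq s$, and taking ceilings yields $\lceil \frac{s+1}{2} \rceil \leq \hat{\omega} \leq s$ (using that $s$ is an integer so $\lceil s \rceil = s$, and that ceiling is monotone).

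The main obstacle — really the only subtlety — is being careful with the ceiling function and the strictness of inequalities, particularly confirming in case (i) that $\frac{k(s+1)}{k+s}$ is \emph{strictly} above $s$ so that the ceiling does not collapse to $s$, and in case (ii) that the monotonicity argument correctly identifies the endpoint values; one should also note the degenerate $s = 0$ situation separately. Everything else is routine algebraic manipulation of the formula already proven.
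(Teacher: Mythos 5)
Your proof is correct and follows essentially the same route as the paper's: both start from the closed form $\hat{\omega} = \lceil k(s+1)/(k+s) \rceil$, establish monotonicity in $k$, evaluate the endpoints $k = s$ and $k = s^2$ for part (ii), and for part (i) show the fraction lies strictly between $s$ and $s+1$. Your explicit rewriting $(s+1) - \tfrac{s(s+1)}{k+s}$ and the separate note on the degenerate $s=0$ case are minor presentational differences, not a different argument.
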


\begin{proof}
Since $n = k + s$, from Prop. \ref{prop:lowerbound}, we have 
\begin{align}
\label{eq:omega_bound}
    \hat{\omega} = \Bigl\lceil{\frac{k(s+1)}{k + s}}\Bigr\rceil = \Bigl\lceil{\frac{1 + s}{1 + \frac{s}{k}}}\Bigr\rceil ;
\end{align}hence, $\hat{\omega}$ is a non-decreasing function of $k$ for fixed $s$. 

\noindent {\bf Part (i):} When $k > s^2$, we have $\frac{s}{k} < \frac{1}{s}$, and $\frac{1 + s}{1 + \frac{s}{k_A}} > \frac{1 + s}{1 + \frac{1}{s}} = s$. Thus, from \eqref{eq:omega_bound}, $\hat{\omega} > s$. In addition, from \eqref{eq:omega_bound}, for any $s \geq 0$, we have $\hat{\omega} \leq s + 1$. Thus, we have $\hat{\omega} = s + 1$.

\noindent {\bf Part (ii):} If $k = s^2$, from \eqref{eq:omega_bound}, we have $\hat{\omega} = s$. Similarly, if $k = s$, from \eqref{eq:omega_bound}, we have $\hat{\omega} = \ceil{\frac{s+1}{2}}$. Thus, the non-decreasing property of $\hat{\omega}$ in terms of $k$ concludes the proof.
\end{proof}

Now we describe a motivating example below where the encoding scheme meets the lower bound mentioned in Prop. \ref{prop:lowerbound}.

\begin{example}
\label{ex:toy_matvec}
\input{matrixvector_n_6_new}
Consider a toy system for distributed matrix-vector multiplication with $n = 6$ edge devices each of which can store $1/4$ fraction of matrix $\bfA$. We partition matrix $\bfA$ into $k_A = 4$ disjoint block-columns, $\bfA_0, \bfA_1, \bfA_2, \bfA_3$. According to Prop. \ref{prop:lowerbound}, the optimal weight $\omega = \omega_A$ can be as low as $\Bigl\lceil\frac{k_A(s+1)}{k_A + s}\Bigr\rceil = 2$. Now, we observe that the way the jobs are assigned in Fig. \ref{matvec6_new} meets that lower bound, where random linear combinations of $\omega_A = 2$ submatrices are assigned to the devices. It can be verified that this system has a recovery threshold $\tau = k_A = 4$, and thus, it is resilient to any $s = 2$ stragglers.
\end{example}

\section{Proposed Matrix-vector Multiplication Approach}
\label{sec:prop_approach}
\vspace{-0.05 in}
In this section, we detail our overall approach for distributed matrix-vector multiplication which is outlined in Alg. \ref{Alg:New_matvec}. We partition matrix $\bfA$ into $k_A$ block columns, $\bfA_0, \bfA_1, \bfA_2, \dots, \bfA_{k_A - 1}$, and assign a random linear combination of $\omega_A$ (weight) submatrices of $\bfA$ to every edge device. We show that for given $n$ and $k_A$, our proposed approach provides resilience to maximum number of stragglers, $s = n - k_A$. In addition, our coding scheme maintains the minimum weight of coding as mentioned in Prop. \ref{prop:lowerbound}.

Formally, we set $\omega_A = \Bigl\lceil{\frac{k_A(s+1)}{k_A + s}}\Bigr\rceil$, and assign a linear combination of $\bfA_i, \bfA_{i + 1}, \dots, \bfA_{i+\omega_A - 1} \, \left(\textrm{indices modulo} \, k_A \right)$ to edge device $W_i$, for $i = 0, 1, \dots, k_A-1$, where the linear coefficients are chosen randomly from a continuous distribution. Next, we assign a random linear combination of $\bfA_{i\omega_A}, \bfA_{i\omega_A + 1}, \bfA_{i\omega_A + 2}, \dots, \bfA_{(i+1)\omega_A - 1} \, \left(\textrm{indices modulo} \, k_A \right)$ to edge device $W_{i}$, for $i = k_A, k_A + 1, \dots, n-1$. Note that every edge device also receives the vector $\bfx$. Once the fastest $\tau = k_A$ edge devices finish and return their computation results, the edge server decodes $\bfA^T \bfx$. Note that we assume $k_A \geq s$, i.e., at most {\it half} of the devices may be stragglers.

\begin{algorithm}[t]
	\caption{Proposed scheme for distributed matrix-vector multiplication}
	\label{Alg:New_matvec}
   \SetKwInOut{Input}{Input}
   \SetKwInOut{Output}{Output}
   \vspace{0.1 in}
   \Input{Matrix $\bfA$, vector $\bfx$, $n$-number of edge devices, $s$-number of stragglers, storage fraction $\gamma_A = \frac{1}{k_A}$, such that $k_A \geq s$.}
   Partition $\bfA$ into $k_A$ disjoint block-columns\;
   Set weight $\, \omega_A =\Bigl\lceil\frac{k_A(s+1)}{k_A + s}\Bigr\rceil$\;
   \For{$i\gets 0$ \KwTo $n-1$}{
   \eIf{$i < k_A$}
   {
   Define $T = \left\lbrace i, i+1, \dots, i + \omega_A - 1 \right\rbrace$ (reduced modulo $k_A$)\;
   }
   { 
   Define $T = \left\lbrace i \omega_A, i \omega_A + 1, \dots, (i+1)\omega_A - 1 \right\rbrace$ (reduced modulo $k_A$)\;
   }
   Create a random vector $\bfr$ of length $k_A$ with entries  $r_{m}$, $0\leq m \leq k_A - 1$\;
   Create a random linear combination of $\bfA_{q}$'s where $q \in T$, thus $\tilde{\bfA}_i = \sum\limits_{q \in T} r_{q} \bfA_q$\;
   Assign encoded submatrix $\tilde{\bfA}_i$ and the vector $\bfx$ to edge device $W_i$\;
   Edge device $W_i$ computes $\tilde{\bfA}_i^T \bfx$\;
   }
   \Output{The edge server recovers $\bfA^T \bfx$ from the returned results by the fastest $k_A$ devices.}
   \vspace{0.1 cm}
\end{algorithm}

\setlength{\textfloatsep}{0pt}

\vspace{-0.1 in}
\subsection{Straggler Resilience Guarantee}
\label{sec:matvecstr}
Next we state the Lemma \ref{lem:hall} which assists us to prove Theorem \ref{thm:matvec} that discusses straggler resilience of our scheme.

\begin{lemma}
\label{lem:hall}
Choose any $m \leq k_A$ edge devices out of all $n$ devices in the distributed system. Now, if we assign the jobs to the edge devices according to Alg. \ref{Alg:New_matvec}, the total number of participating uncoded $\bfA$ submatrices within those $m$ edge devices is lower bounded by $m$. 
\end{lemma}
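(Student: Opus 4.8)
This statement is precisely Hall's marriage condition for the bipartite incidence graph between the $n$ devices and the $k_A$ block-columns $\bfA_0,\dots,\bfA_{k_A-1}$, and the plan is to prove it by contradiction through a union-of-intervals count. First note that, by the indexing in Alg.~\ref{Alg:New_matvec}, the support of every device is a \emph{cyclic interval} of length $\omega_A$ in $\mathbb{Z}_{k_A}$: the ``first-group'' devices $W_0,\dots,W_{k_A-1}$ realize the $k_A$ intervals with starting offsets $0,1,\dots,k_A-1$, one each, while the ``second-group'' devices $W_{k_A},\dots,W_{n-1}$ realize the $s=n-k_A$ blocks $B_j=\{j\omega_A,\dots,(j+1)\omega_A-1\}\bmod k_A$ for $0\le j\le s-1$, whose starting offsets are the equally spaced points $0,\omega_A,\dots,(s-1)\omega_A\bmod k_A$. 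Suppose, for contradiction, that some choice of $m\le k_A$ devices has support union $\calU\subseteq\mathbb{Z}_{k_A}$ with $|\calU|\le m-1$. Then $\calU\neq\mathbb{Z}_{k_A}$, so $\calU$ is a disjoint union of $c\ge1$ maximal arcs of lengths $\ell_1,\dots,\ell_c$ with $\sum_i\ell_i=|\calU|$, and each $\ell_i\ge\omega_A$: every point of $\calU$ lies in some device interval, which (being of length $\omega_A\le|\calU|<k_A$) does not wrap the whole circle and hence sits inside a single arc, so every arc is covered by the length-$\omega_A$ intervals it contains.

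The core step is to bound how many of the $m$ devices can have support inside such a $\calU$. The workhorse is the cyclic Minkowski bound $|\calS+\{0,\dots,\omega_A-1\}|\ge\min(k_A,|\calS|+\omega_A-1)$, valid for any nonempty $\calS\subseteq\mathbb{Z}_{k_A}$, which follows from a one-line optimization of $\sum_{\mathrm{gaps}\ g}\max(0,g-\omega_A)$ over cyclic gap sequences summing to $k_A$. Now split on $\omega_A$. If $\omega_A\ge s+1$, the first-group devices alone suffice: at least $m_1\ge m-s$ of the $m$ chosen devices are first-group (if $m_1=0$ then $m\le s<\omega_A$ and a single interval already has $\ge m$ elements); when $m_1\ge1$ their offsets form a nonempty set $\calS$, so they cover $\ge\min(k_A,\,m_1+\omega_A-1)\ge\min\bigl(k_A,\,(m-s)+\omega_A-1\bigr)\ge\min(k_A,m)=m$ block-columns using $\omega_A\ge s+1$, contradicting $|\calU|\le m-1$. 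If instead $\omega_A\le s$---equivalently $k_A\le s^2$ by Corollary~\ref{cor:lowerbounds}---then also $s\le 2\omega_A-1$ and $s\omega_A\ge k_A$ (both immediate from $k_A\ge s$ together with $\omega_A\ge\tfrac{k_A(s+1)}{k_A+s}$, i.e.\ $n\omega_A\ge k_A(s+1)$), and these are exactly the facts that let the second-group blocks contribute. Writing $\calV$ for the set of feasible starting offsets (those $x$ with $\{x,\dots,x+\omega_A-1\}\subseteq\calU$), a per-arc count gives $|\calV|=\sum_i(\ell_i-\omega_A+1)=|\calU|-c(\omega_A-1)\le m-\omega_A$; the first-group devices among the $m$ occupy distinct offsets in $\calV$, so there are $\le|\calV|$ of them, and by near-equidistribution of the progression $\{j\omega_A\bmod k_A\}$ at most about $\tfrac{s}{k_A}|\calV|$ of the second-group devices have their offset in $\calV$. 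Adding, $m\le|\calV|(1+\tfrac{s}{k_A})=|\calV|\cdot\tfrac{n}{k_A}\le(m-\omega_A)\tfrac{n}{k_A}$, hence $\omega_A\le m(1-\tfrac{k_A}{n})=\tfrac{ms}{n}\le\tfrac{k_As}{n}<\tfrac{k_A(s+1)}{k_A+s}\le\omega_A$, a contradiction.

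The main obstacle is making the second-group count rigorous. The blocks $B_j$ wrap around $\mathbb{Z}_{k_A}$ whenever $\omega_A\nmid k_A$, distinct $j$ can produce the same block once $s\omega_A>k_A$, and $\calV$ is a union of $c$ arcs rather than one interval, so the heuristic ``$\tfrac{s}{k_A}|\calV|$'' must be replaced by a bound of the shape (at most $\sum_i(\lfloor(\ell_i-\omega_A+1)/\omega_A\rfloor+1)$ hits per wrap) times (the number of wraps, about $s\omega_A/k_A$), and one must check that the accumulated ceiling corrections never exceed the slack $\tfrac{k_A}{n}\ge\tfrac12$ that the strict inequality $n\omega_A\ge k_A(s+1)>k_As$ leaves. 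Staying in the regime $\omega_A\le s$, where $s\le 2\omega_A-1$ and $s\omega_A\ge k_A$ force the relevant $B_j$'s to be ``almost disjoint'' and bound the number of wraps, is how I would keep this bookkeeping finite and close the argument. Everything else---the reduction to Hall, the arc decomposition, the Minkowski bound, and the feasible-offset count---is routine. Finally, Lemma~\ref{lem:hall} applied for all $m\le k_A$ is exactly Hall's condition for any $k_A$ surviving devices, so Hall's theorem yields a perfect matching with the block-columns; this matching contributes a nonzero monomial to the determinant of the associated $k_A\times k_A$ encoding matrix, which is therefore a nonzero polynomial in the continuously drawn random coefficients and hence nonzero almost surely---yielding the straggler resilience proved in Theorem~\ref{thm:matvec}.
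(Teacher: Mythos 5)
Your overall strategy is genuinely different from the paper's: you argue by contradiction via an arc decomposition of the covered set $\calU$ and a count of feasible starting offsets, whereas the paper splits the chosen devices into those from $\calW_0=\{W_0,\dots,W_{k_A-1}\}$ and $\calW_1=\{W_{k_A},\dots,W_{n-1}\}$, uses the standard cyclic bound $|\calA_0|\ge\min(m_0+\omega_A-1,k_A)$ for the first group, and invokes a separate structural claim that \emph{any} $\omega_A$ devices of $\calW_1$ already cover all $k_A$ block-columns; a two-case split on whether the number of chosen $\calW_1$-devices is $\le\omega_A-1$ or $\ge\omega_A$ then finishes in a few lines. Your sub-case $\omega_A\ge s+1$ is correct and complete (at least $m-s$ chosen devices are first-group, distinct offsets, cyclic Minkowski bound). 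The problem is the sub-case $\omega_A\le s$, which is exactly where the lemma's content lies.

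There the proof is not closed, and the gap is not mere bookkeeping. Your inequality chain needs the number of second-group devices with offset in $\calV$ to be at most $\tfrac{s}{k_A}|\calV|$ \emph{with essentially zero additive error}: in the critical instance $m=k_A$, $c=1$, the chain $m\le|\calV|\cdot\tfrac{n}{k_A}+E$ with $|\calV|\le m-\omega_A$ yields a contradiction with $\omega_A\ge\tfrac{k_A(s+1)}{n}$ only if $E<1$, i.e.\ $E=0$. But the second-group offsets form an arithmetic progression $\{j\omega_A\bmod k_A\}$ of length $s$, and counting AP points inside a union of arcs generically incurs a $+1$ per arc or per wrap (a single-point arc containing one AP point already gives $1>\tfrac{s}{k_A}\cdot 1$). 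You acknowledge this and propose to control it using $s\le 2\omega_A-1$ and $s\omega_A\ge k_A$, but you never carry out that control, and it is not clear it can be done uniformly over all arc configurations. The statement that actually makes the second group tractable is the one the paper isolates: because the offsets $j\omega_A$ tile $\mathbb{Z}_{k_A}$ nearly evenly, every block-column appears in at least $\lfloor s\omega_A/k_A\rfloor$ of the $s$ second-group devices, and since $s\omega_A\ge k_A$ this forces any $\omega_A$ of them to cover all of $\mathbb{Z}_{k_A}$ (Claim~\ref{clm:m1gw1}). Proving that claim and then splitting on $m_1\lessgtr\omega_A$ replaces your delicate equidistribution estimate with a coarse covering fact and closes the argument; as written, your proof of the $\omega_A\le s$ case is incomplete.
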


\begin{proof}
First we partition all $n$ edge devices into {\it two} sets where the first set, $\calW_0$ includes the first $k_A$ devices and the second set, $\calW_1$, includes the next $s$ edge devices, i.e., we have 
\begin{align}
\label{eq:edgedevices}
    \calW_0 &= \left\lbrace W_0, W_1, W_2, \dots, W_{k_A - 1} \right\rbrace ; \nonumber \\
\textrm{and} \;\; \;  \calW_1 &= \left\lbrace W_{k_A}, W_{k_A+1}, \dots, W_{n-1}  \right\rbrace . 
\end{align} Thus, we have $|\calW_0| = k_A$ and $|\calW_1| = s \leq k_A$. Now, we choose any $m \leq k_A$ edge devices, where we choose $m_0$ devices from $\calW_0$ and $m_1$ devices from $\calW_1$, so that $m = m_0 + m_1$. We denote set of the participating uncoded $\bfA$ submatrices within those devices as $\calA_0$ and $\calA_1$, respectively. Hence, to prove the lemma, we need to show $|\calA_0 \cup \calA_1| \geq m$, for any $m \leq k_A$. 

First, according to Alg. \ref{Alg:New_matvec}, we assign a random linear combination of $\bfA_i, \bfA_{i + 1}, \bfA_{i + 2}, \dots, \bfA_{i+\omega_A - 1} \, \left(\textrm{indices modulo} \, k_A \right)$ to edge device $W_i \in \calW_0$. Thus, the participating submatrices are assigned in a cyclic fashion \cite{das2020coded}, and the total number of participating submatrices within any $m_0$ devices of $\calW_0$ is  
\begin{align}
\label{eq:m1}
|\calA_0| \geq \min (m_0 + \omega_A - 1, k_A).
\end{align} Next, we state the following claim for the number of participating submatrices in $\calW_1$; the detailed proof is in \cite{10313473}.

\begin{claim}
\label{clm:m1gw1}
Choose any $m_1 \geq \omega_A$ devices from $\calW_1$. The number of participating submatrices within these devices, $|\calA_1| = k_A$. 
\end{claim}

\noindent {\bf Case 1:} If $m_1 \leq \omega_A - 1$, from \eqref{eq:m1} we have 
\begin{align*}
    |\calA_0 \cup \calA_1| \geq |\calA_0| & =  \min (m_0 + \omega_A - 1, k_A) \;\; (\textrm{when $m_0 > 0$})\\ 
    & \geq  \min (m_0 + m_1, k_A) \geq m ,
\end{align*} since $m = m_0 + m_1 \leq k_A$. We take account the remaining scenario when $m_0 = 0$. In that case,
\begin{align*}
    |\calA_0 \cup \calA_1| \geq |\calA_1| \geq \omega_A \geq m_0 + m_1 = m.
\end{align*} Here, the inequality $|\calA_1| \geq \omega_A$ holds, because the number of unknowns participating in any device is $\omega_A$.

\noindent {\bf Case 2:} If $m_1 \geq \omega_A$, from Claim \ref{clm:m1gw1} we can say,
\begin{align*}
    |\calA_0 \cup \calA_1| \geq |\calA_1| =  k_A \geq m,
\end{align*} which concludes the proof of the lemma.
\end{proof}

\begin{example}
Consider the same scenario in Example \ref{ex:toy_matvec}, where $k_A = 4$ and $s = 2$, therefore, $\calW_0 = \{W_0, W_1, W_2, W_3\}$ and $\calW_1 = \{ W_4, W_5 \}$. Now, choose $m = 3$ devices, $W_0, W_1$ and $W_4$. Thus, $m_0 = 2$ and $m_1 = 1$. Now, from the figure, we have $\calA_0 = \{\bfA_0, \bfA_1, \bfA_2\}$ and $\calA_1 = \{\bfA_0, \bfA_1\}$. Hence, $|\calA_0 \cup \calA_1| = 3 \geq m$. Similar properties can be shown for any choice $m \leq k_A = 4$ different devices.
\end{example}

Now we state the following theorem which provides the guarantee of resilience to maximum number of stragglers for given storage constraints.

\begin{theorem}
\label{thm:matvec}
Assume that a system has $n$ edge devices each of which can store $1/k_A$ fraction of matrix $\bfA$ and the whole vector $\bfx$ for the distributed matrix-vector multiplication $\mathbf{A}^T \mathbf{x}$. If we assign the jobs according to Alg. \ref{Alg:New_matvec}, we achieve resilience to $s = n - k_A$ stragglers.
\end{theorem}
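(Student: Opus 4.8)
The plan is to reduce straggler resilience to a rank condition on the encoding matrix and then verify that condition using the combinatorial bound of Lemma~\ref{lem:hall} together with a Schwartz--Zippel / generic-rank argument. Concretely, write the encoding of worker $W_i$ as $\tilde{\bfA}_i = \sum_{q=0}^{k_A-1} G_{iq}\bfA_q$, where $\bfG \in \mathbb{R}^{n\times k_A}$ is the random coefficient matrix whose $i$-th row is supported exactly on the index set $T$ prescribed by Alg.~\ref{Alg:New_matvec}. Recovering $\bfA^T\bfx$ from the fastest $\tau=k_A$ workers amounts to inverting the $k_A\times k_A$ submatrix $\bfG_{\calF}$ of $\bfG$ formed by the rows indexed by the returning set $\calF$ (with $|\calF|=k_A$). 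So it suffices to show that \emph{every} $k_A\times k_A$ row-submatrix of $\bfG$ is nonsingular; equivalently, that $\prod_{\calF:\,|\calF|=k_A} \det(\bfG_{\calF})$ is not the identically-zero polynomial in the random entries of $\bfG$, since then the bad set has Lebesgue measure zero and a generic draw from the continuous distribution avoids it almost surely.

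The core step is therefore: for a fixed returning set $\calF$ of size $k_A$, exhibit one assignment of values to the free (nonzero-pattern) entries making $\det(\bfG_{\calF})\neq 0$. This is exactly a bipartite-matching statement. Consider the bipartite graph with $\calF$ on one side and the $k_A$ submatrix indices $\{0,\dots,k_A-1\}$ on the other, with an edge $(i,q)$ whenever $q$ lies in worker $i$'s support set $T_i$. A perfect matching in this graph yields a term in the Leibniz expansion of $\det(\bfG_{\calF})$ that cannot be cancelled (choose the matched variables generically, set the rest to $0$), hence $\det(\bfG_{\calF})\not\equiv 0$. By Hall's theorem, a perfect matching exists iff every subset $\calS\subseteq\calF$ has neighborhood of size at least $|\calS|$ --- but the neighborhood of $\calS$ is precisely the union of the participating uncoded submatrices over those $|\calS|\le k_A$ devices, which Lemma~\ref{lem:hall} lower-bounds by $|\calS|$. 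This is the step I expect to carry the weight of the argument, and it is essentially already done: Lemma~\ref{lem:hall} is the Hall condition in disguise, which is why it was proved first.

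Assembling: Hall $\Rightarrow$ perfect matching for each $\calF$ $\Rightarrow$ $\det(\bfG_{\calF})\not\equiv 0$ for each $\calF$ $\Rightarrow$ the finite product over all $\binom{n}{k_A}$ choices of $\calF$ is a nonzero polynomial $\Rightarrow$ its zero set has measure zero, so a random $\bfG$ drawn from a continuous distribution yields all $\bfG_{\calF}$ invertible with probability one. Then for any set of $s=n-k_A$ stragglers, the $k_A$ non-straggling workers return $\tilde{\bfA}_i^T\bfx = \sum_q G_{iq}\,\bfA_q^T\bfx$; stacking these gives $\bfG_{\calF}$ times the vector of unknowns $(\bfA_0^T\bfx,\dots,\bfA_{k_A-1}^T\bfx)^T$, which we invert to recover $\bfA^T\bfx$. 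Since the straggler set was arbitrary, the scheme is resilient to any $s=n-k_A$ stragglers.

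The only subtlety to flag is the logical order of quantifiers: we need a \emph{single} coefficient matrix $\bfG$ that works for \emph{all} $\binom{n}{k_A}$ straggler patterns simultaneously. Taking the product of all the determinant polynomials and invoking a single genericity statement (rather than a union bound over straggler sets with a fixed $\bfG$) handles this cleanly, and this is why $k_A\ge s$ (at most half stragglers) is used only indirectly --- through Lemma~\ref{lem:hall}, whose proof splits $\calW_0,\calW_1$ using $|\calW_1|=s\le k_A$. No other use of that hypothesis is needed here.
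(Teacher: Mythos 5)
Your proposal is correct and follows essentially the same route as the paper: reduce recovery to nonsingularity of the $k_A\times k_A$ decoding submatrix for each returning set, verify Hall's condition via Lemma~\ref{lem:hall}, extract a perfect matching to witness a nonvanishing term of the determinant, and conclude by a Schwartz--Zippel/genericity argument. Your explicit handling of the quantifier order (one coefficient matrix working for all $\binom{n}{k_A}$ straggler patterns via the product of determinants) is a slightly cleaner phrasing of what the paper leaves implicit, but it is not a different argument.
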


\begin{proof}
According to Alg. \ref{Alg:New_matvec}, first we partition matrix $\bfA$ into $k_A$ disjoint block-columns. Thus, to recover the matrix-vector product, $\bfA^T \bfx$, we need to decode all $k_A$ vector unknowns, $\bfA^T_0 \bfx, \bfA^T_1 \bfx, \bfA^T_2 \bfx, \dots, \bfA^T_{k_A - 1} \bfx$. We denote the set of these $k_A$ unknowns as $\calU$. Now we choose an arbitrary set of $k_A$ edge devices each of which corresponds to an equation in terms of $\omega_A$ of those $k_A$ unknowns. Denoting the set of $k_A$ equations as $\calV$, we can say,  $|\calU| = |\calV| = k_A$. 

Now we consider a bipartite graph $\calG = \calV \cup \calU$, where any vertex (equation) in $\calV$ is connected to some vertices (unknowns) in $\calU$ which participate in the corresponding equation. Thus, each vertex in $\calV$ has a neighborhood of cardinality $\omega_A$ in $\calU$. 

Our goal is to show that there exists a perfect matching among the vertices of $\calV$ and $\calU$. To do so, we consider $\bar{\calV} \subseteq \calV$, where $|\bar{\calV}| = m \leq k_A$. Now, we denote the neighbourhood of $\bar{\calV}$  as $\calN (\bar{\calV}) \subseteq \calU$. Thus, according to Lemma \ref{lem:hall}, for any $m \leq k_A$, we can say that $|\calN (\bar{\calV})| \geq m$. So, according to Hall's marriage theorem \cite{marshall1986combinatorial}, we can say that there exists a perfect matching among the vertices of $\calV$ and $\calU$.

Next we consider the largest matching where the vertex $v_i \in \calV$ is matched to the vertex $u_j \in \calU$, which indicates that $u_j$ participates in the equation corresponding to $v_i$. Now, considering $k_A$ equations and $k_A$ unknowns, we construct the $k_A \times k_A$ coding (or decoding) matrix $\bfH$ where row $i$ corresponds to the equation associated to $v_i$ where $u_j$ participates. We replace row $i$ of $\bfH$ by $\bfe_j$ where $\bfe_j$ is a unit row-vector of length $k_A$ with the $j$-th entry being $1$, and $0$ otherwise. Thus we have a $k_A \times k_A$ matrix where each row has only one non-zero entry which is $1$. In addition, since we have a perfect matching, $\bfH$ will have only one non-zero entry in every column. Thus, $\bfH$ is a permutation of the identity matrix, and therefore, $\bfH$ is full rank. Since the matrix is full rank for a choice of definite values, according to Schwartz-Zippel lemma \cite{schwartz1980fast}, the matrix continues to be full rank for random choices of non-zero entries. Thus, the edge server can recover all $k_A$ unknowns from any set of $k_A$ edge devices.
\end{proof}

\begin{example}
\label{ex:12_3}
\input{matrixvector_opt_n_12}
Consider a system with $n = 12$ devices each of which can store $1/9$-th fraction of matrix $\bfA$. We partition $\bfA$ as $\bfA_0, \bfA_1, \dots, \bfA_8$. According to Alg. \ref{Alg:New_matvec}, we set the weight $\omega_A = \Bigl\lceil\frac{k_A(s+1)}{k_A + s}\Bigr\rceil = 3$, and assign random linear combinations of $\omega_A$ submatrices to each device as shown in Fig. \ref{matvec12_opt}. It can be verified that $\bfA^T \bfx$ can be recovered from {\it any} $\tau = k_A = 9$ devices, therefore, the scheme is resilient to {\it any} $s = 3$ stragglers.
\end{example}

\begin{remark}
\label{rem:betterthanjsait}
Our proposed approach meets the lower bound on the weight for any $s \leq k_A$, as mentioned in Prop. \ref{prop:lowerbound}. On the other hand, the approach in \cite{das2023distributed} always assigns a weight $\min(s+1, k_A)$ which can be higher than ours when $s \leq k_A \leq s^2$ (e.g., Examples \ref{ex:toy_matvec} and \ref{ex:12_3}), and thus, may lead to reduction in worker computation speed compared to ours. 
\end{remark}

\subsection{Extension to Heterogeneous System}
\label{sec:hetero_mv}

In this section, we expand our approach described in Alg. \ref{Alg:New_matvec} to accommodate a heterogeneous system comprising $\bar{n}$ edge devices, each with varying computational abilities and communication speeds. The assumption is that the storage capacities and processing speeds of the edge devices (i.e., the overall system architecture) are known before job assignment. Similar to the approach in \cite{das2023distributed}, we assume that the system includes $\lambda$ different types of devices, indexed from $0$ to $\lambda - 1$. For simplicity, we sort the devices in non-ascending order based on their types.

Let $\alpha$ represent the number of assigned columns and $\beta$ the number of columns processed per unit time by the ``weakest'' type device \cite{das2023distributed}. In this setup, an edge device $W_i$ of type $j_i$ is allocated $c_{j_i} \alpha$ coded columns of the data matrix $\bfA$ and has a computation speed of $c_{j_i} \beta$, where $c_{j_i} \geq 1$ is an integer. A higher $c_{j_i}$ indicates a ``stronger'' device $W_i$, capable of storing and processing data $c_{j_i}$ times faster than the ``weakest'' type device.

Given that the devices are sorted in non-ascending order by type, we have $j_0 \geq j_1 \geq j_2 \geq \dots \geq j_{\bar{n}-1} = 0$, which ensures $c_{j_0} \geq c_{j_1} \geq c_{j_2} \geq \dots \geq c_{j_{\bar{n}-1}} = 1$. Note that when $\lambda = 1$ and all $c_{j_i} = 1$, the system reduces to the homogeneous case discussed in Section \ref{sec:matvecstr}, where $0 \leq i \leq n - 1$ and $j_i = 0$. 

Consider the ``weakest'' type of edge device, which requires $\mu$ units of time to process $\alpha$ columns of $\bfA$. Consequently, any edge device $W_i$ of type $j_i$ can process $c_{j_i} \alpha$ columns within the same time frame, $\mu$. From a computation and storage standpoint, this means that a node $W_i$ (type $j_i$) can be viewed as the equivalent of $c_{j_i} \geq 1$ of the ``weakest'' type edge devices. Thus, the $\bar{n}$ devices in our heterogeneous system can be conceptualized as a homogeneous system composed of $n = \sum_{i = 0}^{\bar{n} - 1} c_{j_i}$ ``weakest'' type edge devices.

In other words, a device $W_k$ in the heterogeneous system ($0 \leq k \leq \bar{n} - 1$) can be represented as a combination of devices $\bar{W}_m, \bar{W}_{m+1}, \dots, \bar{W}_{m + c_{kj} - 1}$ in a homogeneous system, where $m = \sum_{i = 0}^{k-1} c_{j_i}$, and $W_k$ is of type $j_i$. To further illustrate this, for any worker node index $\bar{k}_A$ (where $0 \leq \bar{k}_A \leq \bar{n} - 1$), we define $k_A = \sum_{i = 0}^{\bar{k}A - 1} c_{j_i}$ and $s = \sum_{i = \bar{k}A}^{\bar{n} - 1} c_{j_i}$, thus $n = \sum_{i = 0}^{\bar{n} - 1} c_{j_i} = k_A + s$.
Thus, a heterogeneous system of $\bar{n}$ edge devices can be thought as a homogeneous system of $n = k_A + s$ nodes, for any $\bar{k}_A$ ($0 \leq \bar{k}_A \leq \bar{n} - 1$). 

Next, similar to \cite{das2023distributed}, we state the following corollary (of Theorem \ref{thm:matvec}) that demonstrates the straggler resilience property of our proposed approach in a heterogeneous setting. The corresponding proof can be obtained based on results in \cite{das2023distributed}.

\begin{corollary}
 \label{cor:matvec}
Consider a heterogeneous system of $\bar{n}$ devices of different types and assume any $\bar{k}_A$ (where $0 \leq \bar{k}_A \leq \bar{n} - 1$). Now, if the jobs are assigned to the modified homogeneous system of $n = k_A + s$ ``weakest'' type edge devices according to Alg. \ref{Alg:New_matvec}, the system will be resilient to $s$ such nodes.
\end{corollary}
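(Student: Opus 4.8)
The plan is to reduce the heterogeneous claim entirely to the homogeneous Theorem~\ref{thm:matvec} via the ``splitting'' construction already set up in the preceding paragraphs. First I would make the correspondence between the two systems precise: each physical device $W_k$ of type $j_k$ in the heterogeneous system is identified with the consecutive block of $c_{j_k}$ virtual ``weakest-type'' devices $\bar{W}_m,\bar{W}_{m+1},\dots,\bar{W}_{m+c_{j_k}-1}$ with $m=\sum_{i=0}^{k-1}c_{j_i}$, so that the virtual system has exactly $n=\sum_{i=0}^{\bar n-1}c_{j_i}=k_A+s$ nodes. Under this identification, $W_k$ is assigned the union of the encoded submatrices that Alg.~\ref{Alg:New_matvec} assigns to $\bar W_m,\dots,\bar W_{m+c_{j_k}-1}$; since $W_k$ has storage $c_{j_k}\gamma_A$ and speed $c_{j_k}\beta$, it can both store and compute all $c_{j_k}$ of these virtual tasks within the common time budget $\mu$. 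Hence a set of $\sigma$ stragglers in the heterogeneous system (chosen so that the surviving devices have total weighted capacity at least $k_A$, i.e.\ $s=\sum_{\text{stragglers}}c_{j_i}$ in the corollary's notation) corresponds to a set of exactly $s$ virtual stragglers in the homogeneous system.

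The second step is the decoding argument, which is immediate once the reduction is in place: the fastest surviving heterogeneous devices collectively return precisely the computation results $\tilde{\bfA}_i^T\bfx$ of $n-s=k_A$ virtual devices, so by Theorem~\ref{thm:matvec} (applied to the homogeneous system of $n=k_A+s$ weakest-type nodes) the edge server recovers all $k_A$ unknowns $\bfA_0^T\bfx,\dots,\bfA_{k_A-1}^T\bfx$, and therefore $\bfA^T\bfx$. One should also note that the choice of the split point $\bar k_A$ is free: for any $0\le \bar k_A\le \bar n-1$ the definitions $k_A=\sum_{i=0}^{\bar k_A-1}c_{j_i}$, $s=\sum_{i=\bar k_A}^{\bar n-1}c_{j_i}$ give a valid homogeneous instance, and since the devices are sorted in non-ascending order of type, the $s$ straggler slots are concentrated among the weakest physical devices, which is the natural worst case.

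The only genuine subtlety — and the step I would flag as the main obstacle — is checking that the ``weakest-type'' decomposition is compatible with the \emph{cyclic} assignment structure of Alg.~\ref{Alg:New_matvec}, on which Lemma~\ref{lem:hall} (hence Theorem~\ref{thm:matvec}) relies. Concretely, Alg.~\ref{Alg:New_matvec} treats the first $k_A$ virtual indices differently from the remaining $s$ (cyclic consecutive windows $\{i,\dots,i+\omega_A-1\}$ versus disjoint blocks $\{i\omega_A,\dots,(i+1)\omega_A-1\}$), so one must verify that when a physical device straddles the boundary, or when the virtual-index blocks induced by the $c_{j_i}$ do not align with these window boundaries, the neighbourhood lower bound $|\calN(\bar\calV)|\ge m$ still holds for every choice of $m\le k_A$ equations. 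Because Lemma~\ref{lem:hall} is proved for an \emph{arbitrary} choice of $m\le k_A$ devices from $\calW_0\cup\calW_1$ regardless of which physical device they came from, this compatibility is in fact automatic; I would spell this out by observing that the Hall-condition verification in Lemma~\ref{lem:hall} never uses the grouping of virtual devices into physical ones, so it transfers verbatim. This is exactly why the corollary's proof ``can be obtained based on results in \cite{das2023distributed}'': the reduction plus Theorem~\ref{thm:matvec} suffice, with no new combinatorial input.
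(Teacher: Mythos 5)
Your proposal is correct and follows essentially the same route as the paper, which itself only sketches the argument: identify each physical device with a block of $c_{j_i}$ virtual ``weakest-type'' nodes, note that the resulting homogeneous system has $n=k_A+s$ nodes, and invoke Theorem~\ref{thm:matvec} (whose underlying Lemma~\ref{lem:hall} holds for arbitrary subsets of virtual devices, so the grouping into physical devices is immaterial). Your explicit check that the Hall-condition argument never uses the physical grouping is exactly the point the paper delegates to the cited reference.
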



\begin{example}
\label{exmpl:hetero}
\input{hetero_8}
Consider the example in Fig. \ref{hetero8}, which involves $\bar{n} = 8$ edge devices. Suppose the computation capacities $c_{j_i}$ are as follows: $c_{j_i} = 3$ when $i = 0$, $c_{j_i} = 2$ when $i = 1,  2$, and $c_{j_i} = 1$ for $3 \leq i \leq 7$. Therefore, the total computation capacity is $n = \sum_{i = 0}^{\bar{n} - 1} c_{j_i} = 12$. Now, $\bar{k}_A = 5$, leading to $k_A = \sum_{i = 0}^{\bar{k}A - 1} c_{j_i} = 9$, and $s = \sum_{i = \bar{k}_A}^{\bar{n} - 1} c_{j_i} = 3$. Hence, each ``weakest'' device is assigned $\frac{1}{9}$ of the total job. This scheme is resilient to the failure of any $s = 3$ block-column processing units, meaning it can tolerate the failure of any three type $0$ devices or any one type $2$ device.

If the ``stronger'' devices (those with $c_{j_i} > 1$) fail to complete all their tasks on time, our proposed system can still utilize their partial computations. For instance, consider a scenario where none of $W_3, W_4, \dots, W_7$ is a straggler, but $W_0, W_1$ and $W_2$ are slower than their expected speed. If $W_0$ completes two out of its three assigned tasks, and each of $W_1$ and $W_2$ completes one out of their two assigned tasks, we can still recover the final result. This is because we only need to process $k_A = 9$ block-columns across all nodes. Thus, our approach can leverage partial stragglers ($W_0$, $W_1$, and $W_2$ in this example).

\end{example}

\begin{remark}
While we extend our method to the heterogeneous case in a similar manner as \cite{das2023distributed}, our scheme involves a smaller encoding weight ($\omega$) compared to \cite{das2023distributed}, which enhances the overall speed of the job for a sparse input matrix $\bfA$.
\end{remark}

\subsection{Computational Complexity for a edge device} 
\label{sec:compcomplexity}
In this work, we assume that the ``input'' matrix, $\bfA \in \mathbb{R}^{t \times r}$, is sparse, i.e., most of the entries of $\bfA$ are zero. Let us assume that the probability for any entry of $\bfA$ to be non-zero is $\mu$, where $\mu > 0$ is very small. According to Alg. \ref{Alg:New_matvec}, we combine $\omega_A$ submatrices (of size $t \times r/k_A$) to obtain the coded submatrices and assign them to the edge devices. Hence, the probability for any entry of any coded submatrix to be non-zero is $1 - (1 - \mu)^{\omega_A}$ which can be approximated by $\omega_A \mu$. Thus, in our approach, the per edge device computational complexity is $\calO \left( \omega_A \mu \times  \frac{rt}{k_A} \right)$ where $\omega_A = \Bigl\lceil\frac{k_A(s+1)}{k_A + s}\Bigr\rceil$.

On the other hand, the dense coded approaches \cite{yu2017polynomial,8849468, 8919859} combine $k_A$ submatrices for encoding, hence, their per edge device computational complexity is $\calO \left( k_A \mu \times  \frac{rt}{k_A} \right) = \calO \left( \mu \times  r t \right)$ which is $\frac{k_A}{\omega_A} \approx \frac{s + k_A}{ s + 1}$ times higher than that of ours. Moreover, the recent sparse matrix computations approach in \cite{das2023distributed} combines $s+1$ submatrices for encoding (when $s < k_A$). Thus, its corresponding computational complexity is $\calO \left( (s+1) \mu \times  \frac{rt}{k_A} \right)$; approximately $(1 + s/k)$ times higher than that of ours. We clarify this with the following example.

\begin{example}
Consider the same setting in Example \ref{ex:12_3} where $n = 12$, $k_A = 9$ and $s = 3$. In this scenario, the recent work \cite{das2023distributed} assigns random linear combinations of $\min(s+1, k_A) = 4$ submatrices to each device. Thus, our proposed approach enjoys a $25\%$ decrease in computational complexity, which could significantly enhance the overall computational speed.
\end{example}

\vspace{-0.02 in}
\subsection{Numerical Stability and Coefficient Determination Time}
\label{sec:trialtime}
In this section, we discuss the numerical stability of our proposed matrix-vector multiplication scheme. The condition number is widely regarded as a significant measure of numerical stability for such a system \cite{das2019random, 8849468, 8919859}. In the context of a system consisting of $n$ edge devices with $s$ stragglers, the worst-case condition number ($\kappa_{worst}$) is defined as the highest condition number among the decoding matrices when considering all possible choices of $s$ stragglers. In methods involving random coding like ours, the idea is to generate random coefficients multiple (e.g., 20) times  and selecting the set of coefficients that results in the lowest $\kappa_{worst}$.

In our proposed method, we partition matrix $\bfA$ into $k_A$ disjoint block-columns, which underscores the necessity to recover $k_A$ vector unknowns. Consequently, in each attempt, we must determine the condition numbers of ${n \choose k_A}$ decoding matrices, each of size $k_A \times k_A$. This whole process has a total complexity of $\calO\left( {n \choose k_A} k_A^3\right)$. On the other hand, the recent sparse matrix computation techniques, such as sparsely coded straggler (SCS) optimal scheme discussed in \cite{das2020coded} or the class-based scheme discussed in \cite{dasunifiedtreatment} partition matrix $\bfA$ into $\Delta_A = \textrm{LCM}(n, k_A)$ block-columns. Thus, in each attempt, they need to ascertain the condition numbers of ${n \choose k_A}$ matrices, each of which has a size $\Delta_A \times \Delta_A$, resulting in a total complexity of $\calO\left( {n \choose k_A} \Delta_A^3\right)$. Since $\Delta_A$ can be considerably larger than $k_A$, those methods involve significantly more complexity compared to our proposed scheme. 



\begin{remark}
We can also utilize our proposed approach in a D2D-enabled Federated Learning scenario as discussed in \cite{das2023jsait_submitted}. The assumption is that the edge devices may be responsible for generating local data and also for performing some computational tasks. The devices transmit their local submatrices to some other trusted devices, and build resilience against the computationally slower devices. Since our proposed approach involves a smaller weight, any device would require transmitting to less number of other devices compared to the approach in \cite{das2023jsait_submitted}, therefore, the overall process would be faster.
\end{remark}

\section{Proposed Matrix-matrix Multiplication Approach}
\label{sec:matmat}
In this section, we discuss our proposed distributed matrix-matrix multiplication approach. As we mentioned above, we consider a system of $n$ edge devices, each of which can store $1/k_A$ and $1/k_B$ fraction of matrices $\bfA$ and $\bfB$, respectively, and our aim is to build resilience to any $s = n - k$ stragglers, where $k = k_A k_B$. Similar to the matrix-vector case, we assume that $s \leq k$, i.e., not more than half of the devices will not be stragglers. The overall procedure is given in Alg. \ref{Alg:New_matmat}. 

Here we provide an overview of our proposed method. First we partition matrices $\bfA$ and $\bfB$ into $k_A$ and $k_B$ disjoint block columns, respectively, as $\bfA_0, \bfA_1, \bfA_2, \dots, \bfA_{k_A - 1}$ and $\bfB_0, \bfB_1, \bfB_2, \dots, \bfB_{k_B - 1}$, respectively. 
Then we consider the lower bound stated in Prop. \ref{prop:lowerbound}, and find $\hat{\omega} = \lceil{\frac{(n-s)(s+1)}{n}}\rceil$. Next, we find an $\omega \geq \hat{\omega}$ such that $\omega = \omega_A \omega_B$ where $1 < \omega_A < k_A $ and $1 < \omega_B < k_B $, $\omega_A | k_A$ and $\omega_B | k_B$. 

\begin{algorithm}[t]
	\caption{Proposed scheme for distributed matrix-matrix multiplication}
	\label{Alg:New_matmat}
   \SetKwInOut{Input}{Input}
   \SetKwInOut{Output}{Output}
   \Input{Matrices $\bfA$ and $\bfB$, $n$-number of edge devices, storage fraction $\gamma_A = \frac{1}{k_A}$ and $\gamma_A = \frac{1}{k_B}$ (w.l.o.g., $k_A \leq k_B$), and set $k = k_A k_B$; number of stragglers, $s \leq k$.}
   Partition matrices $\bfA$ and $\bfB$ into $k_A$ and $k_B$ block-columns, respectively\;
   Create a $n \times k_A$ random matrix $\bfR_A$ with entries  $r^A_{i,j}$, $0\leq i \leq n -1 $ and $0\leq j \leq k_A -1$\;
   Create a $n \times k_B$ random matrix $\bfR_B$ with entries  $r^B_{i,j}$, $0\leq i \leq n -1 $ and $0\leq j \leq k_B -1$\;
   Set weights $\omega_A | k_A$ and $\omega_B | k_B$ (where $\omega_A \leq \omega_B$) so that $\omega_A \omega_B \geq \Bigl\lceil\frac{k_A k_B(s+1)}{k_Ak_B + s}\Bigr\rceil$ with $1 < \omega_A < k_A $\;
   
   \For{$i\gets 0$ \KwTo $n-1$}{
   \eIf{$i < k_Ak_B$}
   {
   Define $T = \left\lbrace i, i+1, \dots, i + \omega_A - 1 \right\rbrace$ (entries reduced modulo $k_A$)\;
   Define $S = \left\lbrace j, j+1, \dots, j + \omega_B - 1 \right\rbrace$ (entries reduced modulo $k_B$) where $j = \floor{i/k_A}$\;
    }
   { 
   $T = \left\lbrace \ell \omega_A, \ell \omega_A+1, \dots, (\ell+1) \omega_A - 1 \right\rbrace$ (reduced modulo $k_A$) when $\ell = \textrm{mod} (i, k_A)$\;
    $S = \left\lbrace m \omega_B, m \omega_B+1, \dots, (m +1) \omega_B - 1 \right\rbrace$ (reduced modulo $k_B$) where $m = \Bigl\lfloor{ \frac{i \, \omega_A}{k_A}}\Bigr\rfloor$\;
   }
   
   Create a random linear combination of $\bfA_{q}$'s where $q \in T$, thus $\tilde{\bfA}_i = \sum\limits_{q \in T} r^A_{i,q} \bfA_q$\;
   Create a random linear combination of $\bfB_{q}$'s where $q \in S$, thus $\tilde{\bfB}_i = \sum\limits_{q \in S} r^B_{i,q} \bfB_q$\;
   The edge server assigns encoded submatrices $\tilde{\bfA}_i$ and $\tilde{\bfB}_i$ to edge device $W_i$\;
   Edge device $W_i$ computes $\tilde{\bfA}_i^T \tilde{\bfB}_i$\;
   }
   \Output{The edge server recovers $\bfA^T \bfB$ from the fastest $k_A k_B$ edge devices.}
   \vspace{0.1 cm}
\end{algorithm}
\setlength{\textfloatsep}{0pt}

Now, within the edge devices in $\calW_0 = \left\lbrace W_0, W_1, \dots, W_{k - 1} \right\rbrace$, we assign a random linear combination of $\bfA_i, \bfA_{i+1}, \dots, \bfA_{i+\omega_A - 1}$ (indices of $\bfA$ are reduced modulo $k_A$) to edge device $W_i$, $0 \leq i \leq k - 1$. In other words, the corresponding submatrices of $\bfA$ are shifted in a cyclic manner over the edge devices in $\calW_0$. Next we set $j = \floor{i/k_A}$, and assign $\bfB_j, \bfB_{j+1}, \dots, \bfB_{j+\omega_B - 1}$ (indices of $\bfB$ are reduced modulo $k_B$) to edge device $W_i$, where $0 \leq i \leq k - 1$. 

After that, within the edge devices in $\calW_1 = \left\lbrace W_{k}, W_{k+1}, \dots, W_{n-1}  \right\rbrace$, we assign a random linear combination of $\bfA_{\ell \omega_A}, \bfA_{\ell \omega_A+1}, \dots, \bfA_{(\ell+1) \omega_A - 1}$ (indices of $\bfA$ are reduced modulo $k_A$) to edge device $W_i$, where $\ell = \textrm{mod} (i, k_A)$ and $k \leq i \leq n - 1$. Next we set $m = \floor{\frac{i \omega_A}{k_A}}$, and assign $\bfB_{m \omega_B}, \bfB_{m \omega_B+1}, \dots, \bfB_{(m+1)\omega_B - 1}$ (indices of $\bfB$ are reduced modulo $k_B$) to edge device $W_i$, where $k \leq i \leq n - 1$. After assigning the jobs, all the edge devices will start computing their respective submatrix-products, and once the fastest $\tau = k_A k_B$ devices finish and return their results, the edge server recovers all the unknowns in the form of  $\bfA_i^T \bfB_j$, where $0 \leq i \leq k_A - 1$ and $0 \leq j \leq k_B - 1$.

\begin{example}
\input{matmat_30}
\label{ex:matmat}
Consider the example shown in Fig. \ref{fig:matmat20} where $n = 20, \gamma_A = \gamma_B = 1/4$. So, we partition $\bfA$ and $\bfB$ into $k_A = k_B = 4$ block-columns, respectively. In each of the devices, according to Alg. \ref{Alg:New_matmat}, we assign one coded submatrix from $\bfA$ and one from $\bfB$ which are linear combinations of $\omega_A = \omega_B = 2$ uncoded submatrices with coefficients chosen i.i.d. at random from a continuous distribution. It can be verified that this scheme is resilient to $s = n - k_A k_B = 4$ stragglers. 
\end{example}

\subsubsection{Structure of the Job Assignment}
To describe the structure of the proposed scheme, first we partition the edge devices in $\calW_0 = \left\lbrace W_0, W_1, W_2, \dots, W_{k - 1} \right\rbrace $ into $k_A$ disjoint classes, denoted by $\calM_i$'s, where any $\calM_i$ consists of all the edge devices $W_j$'s if $j \equiv i (\textrm{mod} \; k_A)$. In other words, $\calM_i = \left\lbrace W_i, \, W_{k_A+i}, \, W_{2 k_A+i}, \, \dots  \right\rbrace$, for $i = 0, 1, 2, \dots, k_A - 1$. Hence, $|\calM_i| = k_B$, for any $i$.

Moreover, according to our proposed scheme, the participating submatrices of $\bfA$ are the same over all the edge devices in any $\calM_i$. For instance, in Fig. \ref{fig:matmat20}, we have $\calM_0 =  \left\lbrace W_0, \, W_4, \, W_{8}, \, W_{12}   \right\rbrace$, and random linear combinations of $\bfA_0$ and $\bfA_1$ are assigned to all the corresponding edge devices. At this point, we define a set, $\calD^A_i = \left\lbrace \bfA_i, \bfA_{i+1}, \dots, \bfA_{i + \omega_A - 1} \right\rbrace$, which consists of the participating submatrices of $\bfA$ corresponding to edge device set $\calM_i$, where the indices are reduced modulo $k_A$. Now we state the following claim (with the proof in \cite{das2023distributed}) which gives a lower bound for the cardinality of the union of any arbitrary number of $\calD^A_i$'s.

\begin{claim}
\label{claim:cyclicA}
Consider any $q$ sets $\calD^A_i$'s, $q \leq k_A - \omega_A + 1$, denoted w.l.o.g.,  $\bar{\calD}^A_j$, $0 \leq j \leq q - 1$ arbitrarily. Then $\abs*{\bigcup\limits_{j = 0}^{q-1} \bar{\calD}^A_j} \geq \omega_A + q - 1$.
\end{claim}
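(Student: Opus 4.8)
The plan is to reduce this to a purely combinatorial statement about cyclic intervals on $\mathbb{Z}_{k_A}$. Each set $\calD^A_i = \{\bfA_i, \bfA_{i+1}, \dots, \bfA_{i+\omega_A-1}\}$ (indices mod $k_A$) is a cyclic interval of length $\omega_A$, determined by its starting index $i$. Given $q$ such intervals with distinct starting indices, I want to show their union has size at least $\omega_A + q - 1$, under the hypothesis $q \le k_A - \omega_A + 1$.

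First I would sort the $q$ chosen starting indices and think of the intervals as arcs on a circle of $k_A$ points. The key observation is that the union of several arcs of common length $\omega_A$ breaks into a disjoint collection of maximal "runs" (connected arcs on the circle). If the $q$ arcs together form $c$ connected components (runs), and the $j$-th run is built from $q_j$ of the arcs with $\sum_j q_j = q$, then each run of $q_j$ arcs covers at least $\omega_A + q_j - 1$ points: this is the one-dimensional fact that $q_j$ distinct intervals of length $\omega_A$ on a line (which a single run is, as long as it does not wrap all the way around) whose union is connected must span at least $\omega_A + (q_j - 1)$ consecutive integers, since introducing each new interval beyond the first, while keeping connectivity, extends the span by at least $1$. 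Summing, the total union size is at least $\sum_j (\omega_A + q_j - 1) = c\,\omega_A + q - c = q + c(\omega_A - 1) \ge q + (\omega_A - 1) = \omega_A + q - 1$ because $c \ge 1$ and $\omega_A \ge 1$ (indeed $\omega_A > 1$).

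The one place that needs care — and which I expect to be the main obstacle — is ensuring no run "wraps around" the circle and collapses, i.e., that a component cannot cover the whole circle and thereby cover fewer than $\omega_A + q_j - 1$ points in the naive linear count, or more subtly that the component really is an arc (an interval) rather than all of $\mathbb{Z}_{k_A}$. Here the hypothesis $q \le k_A - \omega_A + 1$ enters: if the union were all of $\mathbb{Z}_{k_A}$, then already $|\bigcup \bar\calD^A_j| = k_A \ge \omega_A + q - 1$ since $q \le k_A - \omega_A + 1$, so the bound holds trivially in that case; otherwise the union omits at least one point of the circle, so "cutting" the circle at an omitted point turns every component into a genuine linear interval and the summation argument above applies verbatim. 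So I would split into the two cases (union is everything / union misses a point) at the very start.

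An alternative, slicker route I would keep in reserve: induct on $q$. For $q = 1$ the union has size exactly $\omega_A \ge \omega_A + 1 - 1$. For the inductive step, among the $q$ intervals pick one, say $\bar\calD^A_{j_0}$, whose starting point is "extremal" in the sense that removing it leaves the union of the remaining $q-1$ intervals still satisfying the hypothesis (which it does, as $q - 1 \le k_A - \omega_A + 1$), and such that $\bar\calD^A_{j_0}$ contributes at least one new element; the existence of such an interval again follows from the union not being the whole circle (handled as the trivial case), because then there is an omitted point and the arc adjacent to the "gap" on either side supplies a fresh element. Then $|\bigcup_{j=0}^{q-1}| \ge |\bigcup_{j \ne j_0}| + 1 \ge (\omega_A + (q-1) - 1) + 1 = \omega_A + q - 1$. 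I expect the first (component-counting) argument to be cleanest to write, with the wrap-around case dispatched immediately via the counting inequality $q \le k_A - \omega_A + 1$.
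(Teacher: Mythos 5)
Your proposal is correct and complete. Note that the paper itself does not prove this claim at all --- it is stated ``with the proof in [das2023distributed],'' i.e., deferred to a citation --- so your write-up actually supplies an argument the present paper omits. The component-decomposition route you take is the natural one and all the delicate points are handled: the wrap-around case is exactly disposed of by the hypothesis $q \le k_A - \omega_A + 1$ (which gives $k_A \ge \omega_A + q - 1$ when the union is all of $\mathbb{Z}_{k_A}$), and once a missed point exists you may cut there, after which every arc is a genuine linear interval and every component is a union of $q_j$ of them. The per-component bound $\omega_A + q_j - 1$ is where you should be slightly more explicit in a final write-up: the claim that each newly added arc ``extends the span by at least $1$'' relies on the arcs having \emph{equal} length $\omega_A$ and \emph{distinct} starting indices, so that when the arcs of a component are added in increasing order of left endpoint the partial union is always an interval whose right end strictly increases at each step (equivalently, the $q_j$ distinct left endpoints all lie in $\{a,\dots,b-\omega_A+1\}$ where the component is $\{a,\dots,b\}$, forcing $b-a+1 \ge \omega_A + q_j - 1$). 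With that spelled out, the summation $\sum_j(\omega_A+q_j-1) = q + c(\omega_A-1) \ge \omega_A + q - 1$ finishes the proof. Your reserve induction also works, but the component count is cleaner; I would present that one.
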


Now, consider any particular $\calM_i$. According to Alg. \ref{Alg:New_matmat}, the participating submatrices of $\bfB$ are shifted in a cyclic fashion over the edge devices of any $\calM_i$. For instance, in Example \ref{ex:matmat}, the participating submatrices, $\bfB_0, \bfB_1, \bfB_2$ and $\bfB_3$, are shifted in a cyclic fashion within the edge devices of $\calM_0$, i.e., $W_0, W_4, W_{8}$ and $W_{12}$. Next, in the following claim, we find the minimum number of participating unknowns (in the form of  $\bfA_u^T \bfB_v$) within any $\delta$ edge devices from any $\calM_q$. 

\begin{claim}
\label{claim:cyclicB}
Consider $\calM_q$, $0 \leq q \leq k_A - 1$. Denote the minimum of total number of participating unknowns (in the form of  $\bfA_i^T \bfB_j$) within any $\delta$ edge devices from $\calM_q$ by $\rho$. 
Then 
\begin{align*}
    \rho = \omega_A \times \textrm{min} \left( \omega_B + \delta - 1, k_B\right) \, . 
\end{align*}
\end{claim}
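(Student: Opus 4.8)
The plan is to analyze the job assignment within a single class $\calM_q$ directly from the construction in Alg.~\ref{Alg:New_matmat}, separating the roles played by the $\bfA$-submatrices and the $\bfB$-submatrices. The key structural observation is that, by construction, \emph{every} edge device in $\calM_q$ is assigned a random linear combination of the \emph{same} $\omega_A$ submatrices of $\bfA$, namely those indexed by $\calD^A_q = \{\bfA_q, \bfA_{q+1}, \dots, \bfA_{q+\omega_A-1}\}$ (indices mod $k_A$); meanwhile the $\bfB$-submatrices assigned to the devices of $\calM_q$ cycle through windows of $\omega_B$ consecutive submatrices of $\bfB$ (indices mod $k_B$), shifting by one from device to device. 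So for a device $W$ in $\calM_q$ with $\bfB$-window $\calD^B = \{\bfB_v, \dots, \bfB_{v+\omega_B-1}\}$, the unknowns participating in $W$'s product $\tilde{\bfA}^T\tilde{\bfB}$ are exactly all products $\bfA_u^T\bfB_{v'}$ with $\bfA_u \in \calD^A_q$ and $\bfB_{v'} \in \calD^B$; there are precisely $\omega_A \omega_B$ of them, and they form a ``rectangle'' $\calD^A_q \times \calD^B$ in the index grid.

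First I would make this rectangle picture precise: the set of participating unknowns over any $\delta$ chosen devices of $\calM_q$ is $\calD^A_q \times \big(\bigcup_{\ell} \calD^B_{\ell}\big)$, where the union runs over the $\delta$ distinct $\bfB$-windows of those devices. Since the $\bfA$-index set $\calD^A_q$ is fixed and common to all of them, the count factors as $|\calD^A_q| \cdot \big|\bigcup_{\ell} \calD^B_\ell\big| = \omega_A \cdot \big|\bigcup_{\ell}\calD^B_\ell\big|$. So the problem reduces to bounding, from below and tightly (since we want the \emph{minimum}), the size of a union of $\delta$ distinct length-$\omega_B$ cyclic windows in $\mathbb{Z}_{k_B}$.

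Next I would handle that cyclic-union count. This is exactly the same combinatorial fact already used for $\bfA$ in Claim~\ref{claim:cyclicA} (cyclic shifts by one, window length $\omega_B$, modulus $k_B$): choosing $\delta$ \emph{consecutive} windows, starting at positions $v, v+1, \dots, v+\delta-1$, covers exactly $\min(\omega_B + \delta - 1, k_B)$ distinct indices of $\bfB$, and one checks that no choice of $\delta$ windows can cover fewer — each new window beyond the first can add at most one fresh index until all $k_B$ are exhausted, and the consecutive choice realizes this worst case. (When $\delta$ is large enough that $\omega_B + \delta - 1 \geq k_B$, all $k_B$ submatrices of $\bfB$ appear, and the count saturates at $k_B$.) Combining, the minimum total number of participating unknowns over any $\delta$ devices of $\calM_q$ is $\omega_A \cdot \min(\omega_B + \delta - 1, k_B)$, which is the claimed $\rho$. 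I would also note that $\delta$ is implicitly at most $|\calM_q| = k_B$, so the window argument is within range.

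The main obstacle is being careful about the ``minimum'' direction: it is easy to exhibit the consecutive-window configuration achieving $\omega_A \min(\omega_B+\delta-1, k_B)$, but one must argue that \emph{no} selection of $\delta$ devices does worse — i.e., that the $\bfB$-window union cannot be smaller than $\min(\omega_B+\delta-1,k_B)$, and that the $\bfA$-factor is genuinely fixed at $\omega_A$ and cannot shrink (it cannot, since all devices in $\calM_q$ share $\calD^A_q$ and $|\calD^A_q| = \omega_A$ by the divisibility/window construction with $1 < \omega_A < k_A$). The cyclic-union lower bound is the crux, and it is precisely the content underlying Claim~\ref{claim:cyclicA}, so I would invoke that claim (applied to $\bfB$, $\omega_B$, $k_B$) rather than reprove it, and then just multiply through by the fixed factor $\omega_A$.
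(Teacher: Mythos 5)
Your proposal is correct and follows essentially the same route as the paper: the paper likewise observes that all devices in $\calM_q$ share the same $\omega_A$ submatrices of $\bfA$, invokes the cyclic-window union lower bound $\min(\omega_B+\delta-1,k_B)$ for the $\bfB$-windows (citing the cyclic-scheme proof of \cite{das2020coded}, which is the same combinatorial fact underlying Claim~\ref{claim:cyclicA}), and multiplies by the fixed factor $\omega_A$. Your ``rectangle'' factorization of the participating unknowns is just a more explicit phrasing of the paper's final sentence, so no further comparison is needed.
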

\begin{proof}
The participating uncoded submatrices of $\bfB$ are shifted in a cyclic fashion within the edge devices of $\calM_q$. Thus according to the proof of cyclic scheme in Appendix C in \cite{das2020coded}, the minimum number of constituent submatrices of $\bfB$ within any $\delta$ edge devices of $\calM_q$ is $\textrm{min}\, (\omega_B + \delta - 1, k_B)$ if $\delta \geq 1$. Now the coded submatrices of $\bfB$ are multiplied by linear combinations of the same $\omega_A$ submatrices, thus the minimum of total number of participating unknowns (in the form of $\bfA^T_u \bfB_v$) is $\rho = \omega_A \times \textrm{min} \left( \omega_B + \delta - 1, k_B\right) $.
\end{proof}

\subsubsection{Rearrangement of ${\calM}_i$'s}
\label{subsec:rearrange}
Before stating the necessary theorem, we discuss a pre-processing step that rearranges the $\calM_i$'s. Choose any arbitrary $m_0$ edge devices ($m_0 \leq k_A k_B$), and assume that $\delta_i$ devices have been chosen from $\calM_i$, for $0 \leq i \leq k_A - 1$, so that $\sum_{i = 0}^{k_A - 1} \delta_i  = m_0$. Now, we 
rearrange the $\delta_i$'s in a non-increasing sequence so that $\tilde{\delta}_0 \geq \tilde{\delta}_1 \geq \tilde{\delta}_2 \geq \dots \geq \tilde{\delta}_{k_A - 1}$, and rename the corresponding $\calM_i$'s as $\tilde{\calM}_i$'s so that $\tilde{\delta}_i$ devices have been  chosen from $\tilde{\calM}_i$. 

Now, we denote $\rho_0$ as the minimum of total number of participating unknowns (in the form of  $\bfA_i^T \bfB_j$) within the $\tilde{\delta}_0 $ edge devices of $\tilde{\calM}_0$. Thus, according to Claim \ref{claim:cyclicB},
\begin{equation}
    \label{eq:rhozero}
    \rho_0 =  \omega_A \times \min ( \omega_B + \tilde{\delta}_0 - 1, k_B) \; .  
\end{equation}

After that, we move to $\tilde{\calM}_1, \tilde{\calM}_2, \dots, \tilde{\calM}_{k_A - \omega_A}$, sequentially, to find the number of additional participating unknowns within the corresponding $\tilde{\delta}_i $ edge devices of $\tilde{\calM}_i$, where $1 \leq i \leq k_A - 1$. We denote $\rho_i$ as the minimum number of such additional participating unknowns in $\tilde{\calM}_i$. 

Here, according to Claim \ref{claim:cyclicA}, $\abs*{\bigcup\limits_{j = 0}^{0} \bar{\calD}^A_j} \geq \omega_A$ and $\abs*{\bigcup\limits_{j = 0}^{1} \bar{\calD}^A_j} \geq \omega_A + 1$. Thus,  there will be at least one additional participating submatrix of $\bfA$ in $\tilde{\calM}_0 \cup \tilde{\calM}_1$ in comparison to $\tilde{\calM}_0$, and the property will continue to hold until we consider the set $\tilde{\calM}_0 \cup \tilde{\calM}_1 \cup \dots \cup \tilde{\calM}_0 \cup \tilde{\calM}_{k_A - \omega_A}$. Now, since the submatrices of $\bfB$ (which will be multiplied by the additional submatrix of $\bfA$) are shifted in a cyclic fashion within any $\tilde{\calM}_i$, 
\begin{align}
\label{eq:rhobigger}
    \rho_i = \textrm{min} \left( \omega_B + \tilde{\delta}_i - 1, k_B\right)  \; ,
\end{align}
for $1 \leq i \leq k_A - \omega_A$. Note that, $\rho_i$ has a trivial lower bound, {\it zero}, when $k_A - \omega_A + 1 \leq i \leq k_A - 1$. 

Now we state the following lemma that provides a lower bound on the minimum number of participating unknowns (in the form of $\bfA^T_u \bfB_v$) in the equations from any arbitrary $m_0 \leq k_A k_B$ devices. Note that, we assume $k_B \geq k_A$ without loss of generality; if $k_A > k_B$, we can compute $\bfA^T \bfB$ as $\left(\bfB^T \bfA\right)^T$ without any additional computational cost.

\begin{lemma}
\label{lem:no_of_unknowns}
For any arbitrary $k_A \geq 3$ and $k_B \geq 3$ (where $k_A \leq k_B$ without loss of generality), if we assign the jobs to $n = k + s$ edge devices (where $s \leq k = k_A k_B$) according to Alg. \ref{Alg:New_matmat}, then the minimum of total number of participating unknowns within any $m$ edge devices ($m \leq k$) will be lower bounded by $m$.
\end{lemma}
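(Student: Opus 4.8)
The plan is to mimic the structure of Lemma~\ref{lem:hall} from the matrix-vector case, but now tracking unknowns of the form $\bfA_u^T \bfB_v$ instead of single submatrices. First I would split the $m$ chosen devices into $m_0$ devices from $\calW_0$ and $m_1$ devices from $\calW_1$, with $m = m_0 + m_1 \le k = k_A k_B$. For the $m_0$ devices drawn from $\calW_0$, I would further record how many fall in each class $\calM_i$, i.e.\ choose $\delta_i$ devices from $\calM_i$ with $\sum_i \delta_i = m_0$, and apply the rearrangement from Section~\ref{subsec:rearrange} so that $\tilde\delta_0 \ge \tilde\delta_1 \ge \dots \ge \tilde\delta_{k_A-1}$. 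Summing the bounds \eqref{eq:rhozero} and \eqref{eq:rhobigger} gives that the number of participating unknowns contributed by the $\calW_0$ devices is at least
\[
\rho_0 + \sum_{i=1}^{k_A - \omega_A} \rho_i
= \omega_A \min(\omega_B + \tilde\delta_0 - 1, k_B) + \sum_{i=1}^{k_A-\omega_A} \min(\omega_B + \tilde\delta_i - 1, k_B).
\]
The combinatorial heart of the argument is to show this quantity is $\ge m_0$ whenever $m_0 \le k_A k_B$; this is where I expect the real work to be, because one must argue that even in the worst-case allocation of the $\delta_i$'s (many devices piled into few classes, or spread thinly), the cyclic-shift structure on $\bfB$ within each class plus the ``$+1$ new $\bfA$-submatrix per extra class'' guarantee from Claim~\ref{claim:cyclicA} together dominate the count $m_0 = \sum \tilde\delta_i$. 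The delicate cases are when some $\tilde\delta_i$ are large enough that $\omega_B + \tilde\delta_i - 1 \ge k_B$ (so the $\min$ saturates) versus when they are small; I would handle these by a case split on whether $\tilde\delta_0 \ge k_B - \omega_B + 1$, and within each case bound the tail sum $\sum_{i\ge1}\tilde\delta_i$ against $\sum_{i\ge 1}\min(\omega_B+\tilde\delta_i-1,k_B)$ using $\omega_B \ge 2$ and $\tilde\delta_i \le k_B$.

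Next I would handle the $m_1$ devices from $\calW_1$. By the construction in Alg.~\ref{Alg:New_matmat}, the $\calW_1$ devices also combine $\omega_A$ consecutive $\bfA$-blocks (indexed by $\ell = \mathrm{mod}(i,k_A)$ in disjoint windows of length $\omega_A$, using $\omega_A \mid k_A$) and $\omega_B$ consecutive $\bfB$-blocks (disjoint windows of length $\omega_B$, using $\omega_B \mid k_B$). Each such device therefore carries exactly $\omega = \omega_A\omega_B$ participating unknowns, so if $m_0 = 0$ the bound $|\text{unknowns}| \ge \omega \ge \omega_A\omega_B \ge m_1$ would need $m_1 \le \omega$, which need not hold; so instead I would argue that, much as in Claim~\ref{clm:m1gw1}, once enough $\calW_1$ devices are selected their windows tile all of $\{\bfA_0^T\bfB_0,\dots\}$ and $|\calA_1| = k_A k_B = k \ge m$. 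More carefully: because the $\calW_1$ assignment cycles through all $k_A/\omega_A$ disjoint $\bfA$-windows and all $k_B/\omega_B$ disjoint $\bfB$-windows, a modest number of $\calW_1$ devices already forces a union of size $k$; for fewer $\calW_1$ devices I would bound their contribution linearly and combine additively with the $\calW_0$ bound, being careful to avoid double-counting the unknowns common to both sets (which is fine, since I only need a \emph{lower} bound and can use $|\calA_0 \cup \calA_1| \ge \max(|\calA_0|, |\calA_1|)$ or $\ge |\calA_0|$ alone when $m_1$ is small).

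Finally I would assemble the cases exactly as in Lemma~\ref{lem:hall}: (i) if $m_1$ is small (below the threshold at which $\calW_1$ devices saturate), use the $\calW_0$ bound $\sum_i \rho_i \ge m_0$ together with the trivial contribution $\ge m_1$ from the extra $\calW_1$ devices (each carrying $\omega \ge 1$ fresh-or-not unknowns — here I'd need the cyclic structure to ensure at least $m_1$ genuinely new unknowns, paralleling \eqref{eq:m1}), giving $\ge m_0 + m_1 = m$; (ii) if $m_1$ is large, invoke saturation to get $|\calA_1| = k \ge m$; and the edge subcase $m_0 = 0$ reduces to the pure-$\calW_1$ analysis. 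I expect the main obstacle to be the worst-case distributional argument in the first paragraph — proving $\omega_A\min(\omega_B+\tilde\delta_0-1,k_B) + \sum_{i=1}^{k_A-\omega_A}\min(\omega_B+\tilde\delta_i-1,k_B) \ge \sum_{i=0}^{k_A-1}\tilde\delta_i$ for every admissible profile $(\tilde\delta_i)$ — since this must also accommodate the classes $i > k_A-\omega_A$ whose $\rho_i$ lower bound degenerates to zero, forcing one to show the surplus in $\rho_0$ (the $\omega_A$ multiplier) and in the early classes compensates for those lost terms; the conditions $\omega_A \mid k_A$, $\omega_B \mid k_B$, $1 < \omega_A \le \omega_B$, and $k_A,k_B \ge 3$ should all be used precisely here.
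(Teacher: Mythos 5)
Your overall architecture matches the paper's: partition into $\calW_0$ and $\calW_1$, count the $\calW_0$ contribution via the class rearrangement and the $\rho_i$ bounds, invoke saturation of $\calW_1$ when $m_1$ is large, and assemble by a case split on $m_1$ versus $\omega$. However, there is one genuine gap in the $\calW_0$ part that propagates into your case (i). You set the target for the $\calW_0$ devices at $\rho_0 + \sum_{i=1}^{k_A-\omega_A}\rho_i \geq m_0$, and then, for $m_1 \leq \omega-1$, you plan to add "at least $m_1$ genuinely new unknowns" from the selected $\calW_1$ devices to reach $m_0+m_1=m$. That last step is not justified and is false in general: the $\omega$ unknowns carried by a $\calW_1$ device can be entirely contained in the unknowns already covered by the chosen $\calW_0$ devices, so no cyclic-structure argument will guarantee $m_1$ fresh unknowns in the union $\calX_0\cup\calX_1$. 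You flag this yourself as a needed ingredient, but it is not a fixable detail along that route — it is the wrong decomposition of where the slack comes from.

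The paper closes this by proving the \emph{stronger} per-$\calW_0$ bound $|\calX_0| \geq \min(m_0+\omega-1,\,k)$ for any $m_0>0$ (Claim~\ref{clm:m1gw0}); the surplus $\omega-1$ is extracted entirely from the $\calW_0$ selection itself, e.g.\ via $\rho_0=\omega_A(\omega_B+\tilde\delta_0-1)$ contributing an extra $\omega_A\omega_B-1$ beyond $\tilde\delta_0$ in the unsaturated case. Then, when $m_1\leq\omega-1$, one gets $|\calX_0\cup\calX_1|\geq|\calX_0|\geq\min(m_0+\omega-1,k)\geq\min(m_0+m_1,k)=m$ with no claim whatsoever about what $\calW_1$ adds; and when $m_1\geq\omega$ the saturation claim $|\calX_1|=k$ takes over. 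So the inequality you should be proving for every admissible profile $(\tilde\delta_i)$ is
\begin{align*}
\omega_A\min(\omega_B+\tilde\delta_0-1,k_B)+\sum_{i=1}^{k_A-\omega_A}\min(\omega_B+\tilde\delta_i-1,k_B)\;\geq\;\min\Bigl(\sum_{i=0}^{k_A-1}\tilde\delta_i+\omega_A\omega_B-1,\;k_A k_B\Bigr),
\end{align*}
not merely $\geq\sum_i\tilde\delta_i$. Your proposed case split on whether $\tilde\delta_0$ saturates the $\min$, and your observation that the degenerate classes $i>k_A-\omega_A$ must be absorbed by the surplus in $\rho_0$ and the early classes (using $\omega_A\leq\omega_B$ and the divisibility assumptions), are exactly the right tools for that stronger inequality — you just aimed them at too weak a target.
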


\begin{proof}
We prove the lemma in a similar manner as we have proved Lemma \ref{lem:hall}. First, we partition all $n$ edge devices into {\it two} sets where $\calW_0$ includes the first $k$ devices and $\calW_1$, includes the next $s$ edge devices, i.e., we have $\calW_0 = \left\lbrace W_0, W_1, W_2, \dots, W_{k - 1} \right\rbrace$ and $\calW_1 = \left\lbrace W_{k}, W_{k+1}, \dots, W_{n-1} \right\rbrace$, 
where $k = k_A k_B$. Thus, we have $|\calW_0| = k$ and $|\calW_1| = s \leq k$. Now, we choose any $m \leq k$ edge devices, where we choose $m_0$ devices from $\calW_0$ and $m_1$ devices from $\calW_1$, so that $m = m_0 + m_1$. 
We denote set of the participating unknowns within those devices as $\calX_0$ and $\calX_1$, respectively. Hence, to prove the lemma, we need to show $|\calX_0 \cup \calX_1| \geq m$, for any $m \leq k$.

Now we state the following claims to find the number of participating unknowns in $\calW_0$ and $\calW_1$, with the proofs in App. \ref{app:proofW0} and App. \ref{app:proofclaim2}, respectively..

\begin{claim}
\label{clm:m1gw0}
Choose any $m_0 \leq k$ devices from $\calW_0$ such that $m_0 > 0$. The number of participating submatrices within these devices is lower bounded by $\textrm{min} \, (m_0 + \omega - 1, k)$. 
\end{claim}
\begin{claim}
\label{clm:m1gw1_2}
Choose any $m_1 \geq \omega$ devices from $\calW_1$. The number of participating submatrices within these devices, $|\calX_1| = k$. 
\end{claim}

\noindent {\bf Case 1:} $m_1 \leq \omega - 1$. In this case, from Claim \ref{clm:m1gw0}, we have 
\begin{align*}
    |\calX_0 \cup \calX_1| \geq |\calX_0| & =  \min (m_0 + \omega - 1, k) \; \; (\textrm{when} \, \, m_0 > 0) \\ 
    & \geq  \min (m_0 + m_1, k) \geq m ,
\end{align*} since $m = m_0 + m_1 \leq k$. We take account the remaining scenario when $m_0 = 0$. In that case,
\begin{align*}
    |\calX_0 \cup \calX_1| \geq |\calX_1| \geq \omega \geq m_0 + m_1 = m.
\end{align*} Here, the inequality $|\calX_1| \geq \omega = \omega_A \omega_B$ holds, because the number of unknowns participating in any device is $\omega_A \omega_B$.

\noindent {\bf Case 2:} $m_1 \geq \omega$. In this case, from Claim \ref{clm:m1gw1_2} we can say,
\begin{align*}
    |\calX_0 \cup \calX_1| \geq |\calX_1| =  k \geq m,
\end{align*} which concludes the proof of the lemma.
\end{proof}

While Alg. \ref{Alg:New_matmat} and Lemma \ref{lem:no_of_unknowns} have been developed and proved for scenarios where $\omega_A | k_A$ and $\omega_B | k_B$, our approach is applicable for other values also. The job assignment in the edge devices can still be the same as mentioned in Alg. \ref{Alg:New_matmat}. 

\begin{example}
Consider a distributed system when $k > s^2$ as mentioned in Corollary \ref{cor:lowerbounds}(i). In this scenario, $\hat{\omega} = s + 1$. However, $|\calW_1| = s$, thus we have $m_1 \leq \omega - 1$. It indicates that only Case 1 is enough to conclude the proof of Lemma \ref{lem:no_of_unknowns} in this setting. Hence, we do not require the constraints $\omega_A | k_A$ and $\omega_B | k_B$ here, which were required only to prove Claim \ref{clm:m1gw1_2} (and hence, the corresponding Case 2).
\end{example}

Now we state the following theorem which provides the guarantee of resilience to maximum number of stragglers for given storage constraints.

\begin{theorem}
\label{thm:matmat}
Assume that a system has $n$ edge devices each of which can store $1/k_A$ and $1/k_B$ fractions of matrices $\bfA$ and $\bfB$ respectively, for distributed matrix-matrix multiplication $\mathbf{A}^T \mathbf{B}$; $k_A \leq k_B$. If we assign the jobs according to Alg. \ref{Alg:New_matmat}, we achieve resilience to $s = n - k_A k_B$ stragglers, where $s \leq k_A k_B$.
\end{theorem}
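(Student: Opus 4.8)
The plan is to mirror the argument used for Theorem~\ref{thm:matvec}, with Lemma~\ref{lem:no_of_unknowns} playing the role that Lemma~\ref{lem:hall} played in the matrix-vector case. After partitioning $\bfA$ into $k_A$ and $\bfB$ into $k_B$ disjoint block-columns, recovering $\bfA^T\bfB$ is equivalent to decoding the $k=k_Ak_B$ unknowns $\bfA_u^T\bfB_v$, $0\le u\le k_A-1$, $0\le v\le k_B-1$, which I collect in a set $\calU$ with $|\calU|=k$. Pick an arbitrary set of $k$ edge devices; device $W_i$ returns $\tilde{\bfA}_i^T\tilde{\bfB}_i=\sum_{u\in T_i}\sum_{v\in S_i} r^A_{i,u} r^B_{i,v}\,\bfA_u^T\bfB_v$, which is a linear equation in exactly $\omega=\omega_A\omega_B$ of the unknowns in $\calU$. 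Denote the set of these $k$ equations by $\calV$, so $|\calV|=|\calU|=k$.

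Next I would form the bipartite graph $\calG=\calV\cup\calU$ in which an equation vertex is joined to precisely the $\omega$ unknown vertices that appear in it, so every vertex of $\calV$ has degree $\omega$. For any $\bar\calV\subseteq\calV$ with $|\bar\calV|=m\le k$, the neighbourhood $\calN(\bar\calV)\subseteq\calU$ is exactly the set of unknowns participating in the corresponding $m$ devices, so Lemma~\ref{lem:no_of_unknowns} gives $|\calN(\bar\calV)|\ge m$. Hence Hall's condition holds and, by Hall's marriage theorem~\cite{marshall1986combinatorial}, there is a perfect matching between $\calV$ and $\calU$; write $W_i\leftrightarrow(u_i,v_i)$ with $u_i\in T_i$ and $v_i\in S_i$.

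Finally I would convert the matching into a full-rank certificate. Form the $k\times k$ decoding matrix $\bfH$ whose row indexed by $W_i$ and column indexed by $(u,v)$ equals $r^A_{i,u} r^B_{i,v}$ when $u\in T_i,\ v\in S_i$ and $0$ otherwise; the returned results of the chosen $k$ devices equal $\bfH$ times the vector of unknowns, so recovery is possible iff $\det\bfH\neq 0$. Viewing $\det\bfH$ as a polynomial in the independent variables $\{r^A_{i,u}\}\cup\{r^B_{i,v}\}$, evaluate it at the point that, for each $i$, sets $r^A_{i,u_i}=r^B_{i,v_i}=1$ and all other coefficients incident to $W_i$ to $0$; this collapses $\bfH$ into the permutation matrix of the matching, so $\det\bfH\not\equiv 0$. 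The Schwartz--Zippel lemma~\cite{schwartz1980fast} then guarantees $\det\bfH\neq 0$ with probability one when the $r^A_{i,u},r^B_{i,v}$ are drawn i.i.d.\ from a continuous distribution, so any $k=k_Ak_B$ devices let the edge server recover $\bfA^T\bfB$, i.e., the scheme tolerates any $s=n-k_Ak_B$ stragglers.

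The one genuinely new wrinkle compared with Theorem~\ref{thm:matvec} — and the step I expect to require the most care — is that the entries of $\bfH$ are products $r^A_{i,u} r^B_{i,v}$ rather than single random variables, so one must argue explicitly that the determinant polynomial is not identically zero; the perfect-matching evaluation above does this precisely because, for a fixed device $i$, the coefficients $r^A_{i,\cdot}$ and $r^B_{i,\cdot}$ are free and independent of those of the other devices. A secondary point is that Lemma~\ref{lem:no_of_unknowns} is stated for $k_A,k_B\ge 3$; the boundary cases (for instance $k>s^2$, where only Case~1 of that lemma is needed) are handled as indicated in the discussion immediately following the lemma, leaving the graph-theoretic argument above unchanged.
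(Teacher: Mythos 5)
Your proposal is correct and follows essentially the same route as the paper's proof: Lemma~\ref{lem:no_of_unknowns} supplies Hall's condition for the bipartite equation--unknown graph, Hall's marriage theorem yields a perfect matching, and Schwartz--Zippel lifts the resulting permutation-matrix evaluation to generic full rank. Your explicit treatment of the product entries $r^A_{i,u} r^B_{i,v}$ and the evaluation point that collapses $\bfH$ to a permutation matrix is a welcome elaboration of a step the paper leaves implicit by reference to Theorem~\ref{thm:matvec}.
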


\begin{proof}
According to Alg. \ref{Alg:New_matmat}, first we partition matrix $\bfA$ and $\bfB$ into $k_A$ and $k_B$ disjoint block-columns, respectively. Thus, to recover the matrix-matrix product, $\bfA^T \bfB$, we need to decode all $k = k_A k_B$ matrix unknowns, in the form of $\bfA^T_i \bfB_j$, where $0 \leq i \leq k_A - 1$ and $0 \leq j \leq k_B - 1$. We denote the set of these $k$ unknowns as $\calU$. Now we choose an arbitrary set of $k_A k_B$ edge devices each of which corresponds to an equation in terms of $\omega = \omega_A \omega_B$ of those $k$ unknowns. Denoting the set of $k$ equations as $\calV$, we can say,  $|\calU| = |\calV| = k$. 

Now, similar to the proof of Theorem \ref{thm:matvec}, we consider a bipartite graph $\calG = \calV \cup \calU$, where any vertex (equation) in $\calV$ is connected to some vertices (unknowns) in $\calU$ which participate in the corresponding equation. Thus, each vertex in $\calV$ has a neighborhood of cardinality $\omega$ in $\calU$. Our goal is to show that there exists a perfect matching among the vertices of $\calV$ and $\calU$.  Now, according to Lemma \ref{lem:no_of_unknowns}, for any $m \leq k$, we can say that $|\calN (\bar{\calV})| \geq m$. 
Thus, similar to to the proof of Theorem \ref{thm:matvec}, according to Hall's marriage theorem \cite{marshall1986combinatorial} and Schwartz-Zippel lemma \cite{schwartz1980fast}, we can prove that the edge server can recover all $k = k_A k_B$ unknowns from any set of $k$ edge devices.
\end{proof}

\begin{remark}
While Theorem \ref{thm:matmat} has been developed considering a homogeneous setting of edge devices, it can be also extended to a heterogeneous setting \cite{das2023distributed}, similar to the matrix-vector case discussed in Sec. \ref{sec:hetero_mv}.
\end{remark}

\subsection{Computational Complexity for a Worker Node} 
\label{sec:compcomplexitymatmat}
In this work, we assume that $\bfA \in \mathbb{R}^{t \times r}$ and $\bfB \in \mathbb{R}^{t \times w}$ are sparse, only a small fraction ($\eta$) of the entries are non-zero. Therefore, in our case, the computational complexity for any edge device is $\calO \left( \omega_A \eta \times \omega_B \eta \times t \times \frac{r w}{k_A k_B} \right) = \calO \left(\omega_A \omega_B \eta^2 \times \frac{r w t}{k_A k_B} \right)$. On the other hand, any dense coded approach \cite{yu2017polynomial, 8849468} assigns linear combinations of $k_A$ and $k_B$ submatrices, hence, the corresponding computational complexity is approximately $\calO \left( k_A \eta \times k_B \eta \times t \times \frac{r w}{k_A k_B} \right) = \calO \left( \eta^2 \times r w t \right)$;  $\frac{k_A k_B}{\omega_A \omega_B}$ times larger than ours, as $\omega_A < k_A, \omega_B < k_B$.

\begin{example}
Consider the example in Fig. \ref{fig:matmat20}, where $k_A = k_B = 4$ and $s = 4$. The approach in \cite{das2023distributed} requires $\omega \geq 5$; hence sets $\omega_A = 3$ and $\omega_B = 2$. On the other hand, in our approach, we require $\omega = \big\lceil{\frac{16 \times 5}{20}}\big\rceil = 4$, hence, we set $\omega_A = \omega_B = 2$. This indicates a $33\%$ reduction of computational complexity in our case compared to the method in \cite{das2023distributed}.   
\end{example}

\vspace{-0.5 cm}
\subsection{Numerical Stability}
Similar to the discussion in the matrix-vector case in Sec. \ref{sec:trialtime}, we can develop a numerically stable scheme using Alg. \ref{Alg:New_matmat}. Note that in the matrix-matrix case, we need to find two ``good'' sets of random coefficients, one for the encoding of $\bfA$, and the other for the encoding of $\bfB$. We empirically demonstrate the numerical stability of our approach in Sec. \ref{sec:numexp}.

\section{Numerical Experiments}
\label{sec:numexp}


In this section, we conduct simulations on distributed matrix-computations, which can be incorporated within the framework of numerous data processing tasks. For example, high-dimensional linear transformations are vital for dimensionality reduction techniques such as principal component analysis (PCA) \cite{mackiewicz1993principal} and linear discriminant analysis (LDA) \cite{xanthopoulos2013linear}. 
Furthermore, they are fundamental to training deep neural networks and employing them for classification. For example, every layer of a fully-connected deep neural network necessitates matrix-matrix multiplications during both forward and backward propagation \cite{ramamoorthyDTMag20}.

For our numerical experiments, we compare the performance of our algorithm against various alternative techniques \cite{yu2017polynomial, 8849468, 8919859, das2020coded, dasunifiedtreatment, das2023distributed}. It is important to note the existence of several other methodologies tailored specifically for sparse matrix computations. Notably, the approach outlined in \cite{wang2018coded} lacks resilience to the maximum number of stragglers given certain storage constraints. Additionally, the strategy proposed in \cite{xhemrishi2022distributed}, which partitions edge devices into untrusted and partly trusted clusters, diverges from our foundational assumptions. The approach described in \cite{ji2022sparse}, which delegates certain tasks to the edge server to mitigate the probability of rank-deficiency during decoding, also does not align with our model assumptions. Therefore, these methodologies are excluded from consideration in our numerical experimentation section.

We explore a distributed system that consists of $n = 42$ edge devices with $s = 6$ stragglers. We assume that each device can store $\gamma_A = \gamma_B = \frac{1}{6}$ fraction of matrices $\bfA$ and $\bfB$. We consider sparse input matrices $\bfA$ of size $20,000 \times 15000$ and $\bfB$ of size $20,000 \times 12000$. 
We assume three different cases where the sparsity of $\bfA$ and $\bfB$ are $95\%$, $98\%$ and $99\%$, respectively, which indicate that randomly chosen $95\%$, $98\%$ and $99\%$ entries of the matrices are zero. It is worth noting that there exist numerous practical instances where data matrices demonstrate such (or, even more) levels of sparsity (refer to \cite{sparsematrices} for specific examples). We carry out the experiments on an AWS (Amazon Web Services) cluster, utilizing a {\tt c5.18xlarge} machine as the server and {\tt t2.small} machines as the edge devices.

\vspace{0.05 in}

{\bf Worker computation time:} Table \ref{table:worker_comp} presents a comparison among different methods based on the computation time required by edge devices to complete their respective tasks. In this scenario, where $k_A = k_B = 6$, the approaches described in \cite{yu2017polynomial, 8849468, 8919859} allocate two linear combinations of $k_A = k_B = 6$ submatrices of $\bfA$ and $\bfB$, respectively, to each of the edge devices. Consequently, the original sparsity of matrix $\bfA$ (and $\bfB$) is lost within the encoded submatrices. As a result, the edge devices experience a significantly increased processing time for their tasks compared to our proposed approach or the methods outlined in \cite{das2020coded, dasunifiedtreatment, das2023distributed}, which are specifically designed for sparse matrices and involve smaller weights.

To discuss the effectiveness of our approach in more details, we compare the weight of the coding of our approach against the approach in \cite{das2023distributed}. In this scenario, when $n = 42$ and $s = 6$, our approach sets the weight $\Bigl\lceil{\frac{(n-s)(s+1)}{n}}\Bigr\rceil = \Bigl\lceil{\frac{36 \times 7}{42}}\Bigr\rceil = 6$; thus we set $\omega_A = 2$ and $\omega_B = 3$. On the other hand, the approach in \cite{das2023jsait_submitted} uses a weight greater than or equal to $s + 1 = 7$, thus needs to set $\omega_A = 4$ and $\omega_B = 2$. Hence, our approach involves around $25\%$ less computational complexity per edge device, which is supported by the results in Table \ref{table:worker_comp}.

\begin{table*}[t]
\caption{{\small Comparison of worker computation time and communication delay for matrix-matrix multiplication for $n = 42, \gamma_A = \gamma_B = \frac{1}{6}$ when randomly chosen $95\%$, $98\%$ and $99\%$ entries of matrices $\bfA$ and $\bfB$ are zero.}}
\vspace{-0.3 cm}
\label{table:worker_comp}
\begin{center}
\begin{small}
\begin{sc}
\begin{tabular}{c c c c c c c c c}
\hline
\toprule
\multirow{2}{*}{Methods} & \multicolumn{3}{c}{Worker Comp. Time (in s)} & & \multicolumn{3}{c}{Communication Delay (in s)}  \\ \cline{2-4} \cline{6-8}
&  $\mu = 99\%$ &  $\mu= 98\%$ & $\mu = 95\%$ && $\mu = 99\%$ &  $\mu= 98\%$ & $\mu = 95\%$    \\
 \midrule
Poly. Code  \cite{yu2017polynomial} & $1.65$ & $5.19$  & $8.95$ && $0.78$ &  $1.42$ & $2.44$  \\
Ortho Poly Code \cite{8849468}    & $1.60$ & $5.16$  & $9.03$ && $0.80$ &  $1.48$ & $2.36$ \\
RKRP Code \cite{8919859}    & $1.61$ & $5.11$  & $8.96$ && $0.82$ &  $1.44$ & $2.39$  \\
SCS Optimal Scheme \cite{das2020coded}  & $1.09$ & $2.12$  & $5.23$ && $0.39$ &  $0.58$ & $0.93$ \\
Class-based Scheme \cite{dasunifiedtreatment} &$0.74$ & $1.21$  & $4.12$ && $0.30$ &  $0.47$ & $0.77$  \\
Cyclic Code \cite{das2023distributed}  & $0.76$ & $1.24$  & $4.19$ && $0.32$ & $0.51$ & $0.83$ \\
{\textbf{Proposed Scheme}}  & $\mathbf{0.61}$ & $\mathbf{1.04}$  & $\mathbf{3.11}$ && $\mathbf{0.25}$ & $\mathbf{0.39}$ & $\mathbf{0.65}$ \\
\bottomrule
\end{tabular}
\end{sc}
\end{small}
\end{center}
\vspace{-0.2 in}
\end{table*}%

{\bf Communication delay:} In Table \ref{table:worker_comp}, we also illustrate the delay incurred during the transmission of encoded submatrices from the server to the edge devices. The methods outlined in \cite{yu2017polynomial}, \cite{8849468}, and \cite{8919859} utilize dense linear combinations of submatrices, causing a notable rise in non-zero entries within these encoded submatrices. Consequently, transmitting this increased number of non-zero entries results in significant communication delays. In contrast, our proposed approach addresses this issue by employing encoded submatrices formed through linear combinations of a limited number of uncoded submatrices, thus markedly reducing communication delays.

In addition, here we compare our proposed approach against the method in \cite{das2023distributed} in terms of the approximate number of non-zero entries that need to be sent from the edge server to any edge device. We consider the case when the matrices $\bfA$ and $\bfB$ have approximately $99\%$ entries to be zero, hence, only $1\%$ entries are non-zero. Now, in this scenario ($n = 42, k_A = k_B = 6$ and $s = 6$), the corresponding number of non-zero entries for our proposed approach is approximately $\frac{20k \times 15k}{k_A}  \times 0.01 \times \omega_A + \frac{20k \times 12k}{k_B}  \times 0.01 \times \omega_B = 2.2 \times 10^6$, since we set $\omega_A = 2$ and $\omega_B = 3$.
On the other hand, the number of the non-zero entries to be sent to each edge device in the approach \cite{das2023distributed} is approximately $\frac{20k \times 15k}{k_A}  \times 0.01 \times \omega_A + \frac{20k \times 12k}{k_B}  \times 0.01 \times \omega_B = 2.8 \times 10^6$, since in that case $\omega_A = 4$ and $\omega_B = 2$.
Thus our proposed approach requires the server to transmit approximately $22\%$ less non-zero entries than the method in \cite{das2023distributed}; Table \ref{table:worker_comp} confirms this gain of our approach.

\vspace{0.04 in}

{\bf Numerical stability:} 
Next, we assess the numerical stability of distributed systems using different coded matrix computation techniques. We examine the condition numbers of the decoding matrices for various combinations of $n = 42$ workers and $s = 6$ stragglers. By comparing the worst-case condition number ($\kappa_{worst}$) across different methods, we present the $\kappa_{worst}$ values in Table \ref{table:kappa}. The polynomial coding method \cite{yu2017polynomial} encounters issues with ill-conditioned Vandermonde matrices, leading to pronounced numerical instability, as indicated by its notably elevated $\kappa_{worst}$ value. In contrast, our proposed technique, characterized by numerical robustness, yields a smaller $\kappa_{worst}$ compared to the method described in \cite{8849468} where the condition numbers grow exponentially in terms of $s = n - k$. Although the approach outlined in \cite{8919859} achieves a slightly reduced $\kappa_{worst}$ compared to ours, Table \ref{table:worker_comp} reveals substantial increases in computation and communication delays due to the assignment of dense linear combinations of submatrices to the edge devices.

\begin{table}[t]
\caption{\small Comparison among different approaches in terms of worst case condition number $\left(\kappa_{worst} \right)$ 
and the corresponding required time for $10$ trials to find a good set of random coefficients}
\vspace{-0.1 in}
\label{table:kappa}
\begin{center}
\begin{small}
\begin{sc}
\begin{tabular}{c c c}
\hline
\toprule
\multirow{2}{*}{Methods}  & $\kappa_{worst}$ for  & Req. time for \;\\
  & $n = 42$, $s = 6$ & $10$ trials\\

 \midrule

\; \; Poly. Code  \cite{yu2017polynomial}   & $3.67 \times 10^{23}$ & $0$ \\
\; \; Ortho-Poly\cite{8849468}    & $8.80 \times 10^{10}$ & $0$ \\
\; \; RKRP Code\cite{8919859}    & $4.84 \times 10^8$ & $84 \; \textrm{min}$ \\
\; \; SCS Opt. Sch. \cite{das2020coded}   & $7.88 \times 10^9$ & $16 \; \textrm{hr}$ \\
\; \; Class based \cite{dasunifiedtreatment} & $5.81\times 10^8$ & $15 \; \textrm{hr}$ \\
\; \; Cyclic Code  \cite{das2023distributed} & $2.13 \times 10^9$ & $85 
 \; \textrm{min}$ \\
\; \; {\textbf{Proposed Scheme}}  & $\mathbf{{7.95 \times 10^8}}$ & $\mathbf{{86 \; \textrm{min}}}$ \\
\bottomrule
\end{tabular}
\end{sc}
\end{small}
\end{center}
\end{table}%

\vspace{0.02 in}

{\bf Coefficient determination time:} 
Next, Table \ref{table:kappa} shows a comparative analysis of various methods with respect to the time required for performing 10 trials to obtain a ``good'' set of random coefficients that ensures numerical stability of the system. As explained in Section \ref{sec:trialtime}, the techniques proposed in \cite{das2020coded} and \cite{dasunifiedtreatment} involve partitioning matrix $\bfA$ into $\Delta_A = \textrm{LCM}(n, k_A)$ block-columns. For instance, when $n = 42$ and $s = 6$, $\Delta_A = 42$ is significantly larger than $k_A = 6$, which denotes the partition level in our approach. Consequently, when dealing with higher-sized matrices to determine the condition number, the methods proposed in  \cite{dasunifiedtreatment} and \cite{das2020coded} necessitate considerably more time compared to our approach.

\begin{remark}
We also conduct numerical simulations on matrix-vector multiplication (Sec. V in \cite{10313473}), yielding results that mirror a similar trend observed in the matrix-matrix scenario.
\end{remark}

{\bf Encoding weights:} We also compare our approach more closely against the recent method developed in \cite{das2023distributed}, in terms of encoding weights for different scenarios. The results are shown in Fig. \ref{fig:weight_comp}. First, we consider a system of $n = 30$ devices and $s = 9$ stragglers within the matrix-vector scenario. Our proposed approach always matches the theoretical lower bound on the weight developed in Prop. \ref{prop:lowerbound} in the matrix-vector case, and outperforms the approach in \cite{das2023distributed}.

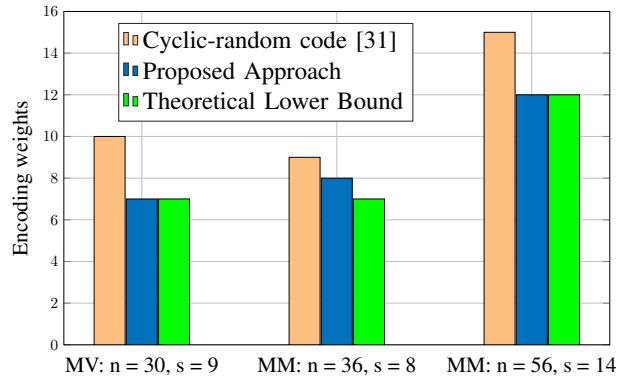
\begin{figure}[t]
\centering
\resizebox{0.93\linewidth}{!}{
\begin{tikzpicture}
\begin{axis}[
width = 5.5in,
height = 3.603in,
at={(2.6in,0.852in)},
major x tick style = transparent,
ybar=2*\pgflinewidth,
bar width=20pt,
ymajorgrids,
xmajorgrids,
xlabel style={font=\color{white!15!black}, font = \Large},
ylabel style={font=\color{white!15!black}, font = \Large},
ylabel={Encoding weights},
symbolic x coords={{\Large MV: n = 30, s = 9}, {\Large MM: n = 36, s = 8}, {\Large MM: n = 56, s = 14}},
xtick = data,
scaled y ticks = false,
enlarge x limits= 0.2,
ymin=0,
ymax=16,
log origin=infty,
legend cell align=left,
legend style={at={(0.1,0.67)}, nodes={scale=1.6}, anchor=south west, legend cell align=left, align=left, draw=white!15!black}
    ]
    
    \addplot[style={fill=orange!50,mark=none}]
            coordinates {({\Large MV: n = 30, s = 9}, 10) ({\Large MM: n = 36, s = 8}, 9) ({\Large MM: n = 56, s = 14}, 15)};
    \addlegendentry{Cyclic-random code \cite{das2023distributed}}
    
    \addplot[style={fill=mycolor1,mark=none}]
            coordinates {({\Large MV: n = 30, s = 9}, 7) ({\Large MM: n = 36, s = 8}, 8) ({\Large MM: n = 56, s = 14}, 12)};
    \addlegendentry{Proposed Approach}
    
    \addplot[style={fill=green,mark=none}]
             coordinates {({\Large MV: n = 30, s = 9}, 7) ({\Large MM: n = 36, s = 8}, 7) ({\Large MM: n = 56, s = 14}, 12) };
    \addlegendentry{Theoretical Lower Bound}
    \end{axis}

\end{tikzpicture}%
}
\caption{\small Comparison of encoding weights in matrix-vector (MV) and matrix-matrix (MM) multiplication between the method in \cite{das2023distributed}, our proposed approach, and the theoretical lower bound for different choices of $n$ and $s$.}
\vspace{0.1 in}
\label{fig:weight_comp}
\end{figure}

Next we consider the matrix-matrix scenario, for two different systems: (a) $n = 36$ and $s = 8$ and (b) $n = 56$ and $s = 14$. While our method involves slightly higher weights than the theoretical lower bound for system (a), it matches the bound for system (b), and outperforms the approach in \cite{das2023distributed} in both cases. Here, the theoretical lower bound in system (a) is $\big\lceil\frac{(36-8)(8+1)}{36}\big\rceil = 7$, which is a prime number. Thus, because of the divisibility issue with $\omega_A, \omega_B >1$, our proposed approach involves a slightly higher weight, $\omega = \omega_A \omega_B = 8$. Note that the dense coded approaches \cite{yu2017polynomial, 8849468, 8919859} involve a weight $n - s$, which is significantly higher than our proposed one and the method in \cite{das2023distributed}, and hence, they are not included in the comparison in Fig. \ref{fig:weight_comp}.

{\bf Numerical robustness and scalability}: Finally, we consider the case of matrix-vector multiplication in different system architectures having different number of edge devices ($n$) and stragglers ($s$). The $\kappa_{worst}$ values for different scenarios obtained from different available approaches are demonstrated in Fig. \ref{fig:kappaworst}. These results confirm that our proposed approach consistently leads to smaller $\kappa_{worst}$ values compared to the numerically stable approaches in \cite{8849468} and \cite{das2023distributed}, and provides competitive results compared to another numerically stable method in \cite{8919859} for even larger values of $n$ and $s$. This verifies the numerical robustness and the scalability of our proposed method for a larger network comprised of more edge devices. Note that in addition to the numerical robustness, unlike 
 \cite{8919859}, our approach can leverage the inherent sparsity of the ``input'' matrices, and thus leads to reduced computation and communication delays, as demonstrated in Table \ref{table:worker_comp}.
\begin{figure}[t]
\centering
\resizebox{0.9\linewidth}{!}{
\begin{tikzpicture}
\begin{axis}[
width = 5.5in,
height = 4.03in,
at={(2.6in,0.852in)},
major x tick style = transparent,
ybar=2*\pgflinewidth,
bar width=14pt,
ymajorgrids,
xmajorgrids,
xlabel style={font=\color{white!15!black}, font = \Large},
ylabel style={font=\color{white!15!black}, font = \Large},
ylabel={Worst case condition number ($\kappa_{worst}$)},
ymode=log,
symbolic x coords={{\Large n = 20, s = 3}, {\Large n = 20, s = 4}, {\Large n = 40, s = 3}, {\Large n = 40, s = 4}},
xtick = data,
scaled y ticks = false,
enlarge x limits= 0.15,
ymin=1e2,
ymax=1e10,
log origin=infty,
legend cell align=left,
legend style={at={(0.03,0.63)}, nodes={scale=1.6}, anchor=south west, legend cell align=left, align=left, draw=white!15!black}
    ]
    
    \addplot[style={fill=mycolor1,mark=none}]
            coordinates {({\Large n = 20, s = 3}, 3.8e4) ({\Large n = 20, s = 4}, 4.1e5) ({\Large n = 40, s = 3}, 1.8e6) ({\Large n = 40, s = 4}, 8.1e7)};
    \addlegendentry{Ortho-Poly Code \cite{8849468}}
    
    \addplot[style={fill=mycolor2,mark=none}]
            coordinates {({\Large n = 20, s = 3}, 9.1e3) ({\Large n = 20, s = 4}, 3.2e4) ({\Large n = 40, s = 3}, 3.9e5) ({\Large n = 40, s = 4}, 3.4e6)};
    \addlegendentry{RKRP Code \cite{8919859}}
    
    \addplot[style={fill=yellow!70,mark=none}]
            coordinates {({\Large n = 20, s = 3}, 7.9e4) ({\Large n = 20, s = 4}, 7.3e5) ({\Large n = 40, s = 3}, 1.3e7) ({\Large n = 40, s = 4}, 6.3e8)};
    \addlegendentry{Cyclic-random code \cite{das2023distributed}}
    
    \addplot[style={fill=green!70,mark=none}]
             coordinates {({\Large n = 20, s = 3}, 2.1e4) ({\Large n = 20, s = 4}, 6.9e4) ({\Large n = 40, s = 3}, 8.1e5) ({\Large n = 40, s = 4}, 1.7e7)};
    \addlegendentry{Proposed Approach}
    \end{axis}

\end{tikzpicture}%
}
\caption{\small Comparison of worst case condition numbers ($\kappa_{worst}$) of the encoding matrices obtained by different methods in distributed matrix-vector multiplication with different choices of $n$ and $s$.}
\vspace{0.1 in}
\label{fig:kappaworst}
\end{figure}
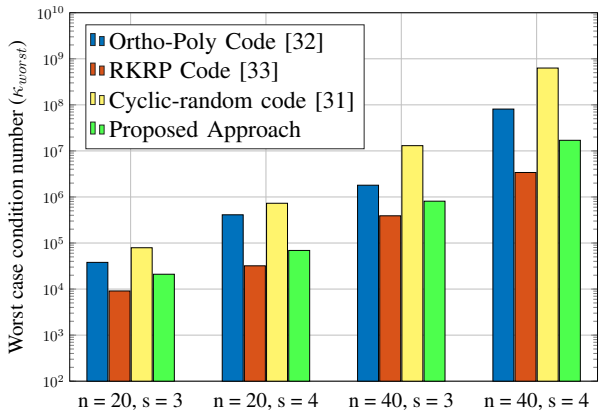

\section{Conclusion}
\label{sec:conclusion}

In this study, we developed novel distributed matrix computation techniques to maintain the sparsity characteristics of input matrices while ensuring resilience against a maximum number of stragglers within specified storage constraints. First, we find a lower bound on the homogeneous weight of the encoded submatrices, and we show that our approaches meet that lower bound. Unlike conventional dense coded methods \cite{yu2017polynomial, 8849468, 8919859, das2019random}, 
our proposed approach restricts coding within submatrices, thereby preserving the inherent sparsity of the input matrices $\bfA$ and $\bfB$ to a significant extent. Consequently, both worker computation and communication delays are markedly reduced compared to different coding techniques. In addition, unlike the sparsely coded methods in \cite{das2020coded, dasunifiedtreatment}, we demonstrate the extension of the approach to the heterogeneous setting, where the edge devices are rated with different computational and communication speeds.  

Future research avenues include exploring server-less architectures where edge devices collaborate directly, rather than relying on a edge server for matrix encoding \cite{9252954}. Enhancing performance in heterogeneous settings \cite{10024766} by allocating multiple jobs with varying weights as proposed in \cite{ozfatura2021coded, ozfatura2020age}, and devising schemes where knowledge about edge devices is limited prior to job assignment, can also be important directions for future investigation. 

\appendix

\subsection{Proof of Claim \ref{clm:m1gw0}}
\label{app:proofW0}
\begin{proof}
As discussed in Sec. \ref{subsec:rearrange},  we rearrange the ${\calM}_i$'s such that $\tilde{\delta}_0 \geq \tilde{\delta}_1 \geq \tilde{\delta}_2 \geq \dots \geq \tilde{\delta}_{k_A - 1}$ and rename the corresponding $\calM_i$'s as $\tilde{\calM}_i$'s so that $\tilde{\delta}_i$ devices have been  chosen from $\tilde{\calM}_i$. Next we denote the minimum number of participating unknowns in $\tilde{\calM}_i$ by $\rho_i$ when $0 \leq i \leq k_A - 1$. The trivial lower bound for $\rho_i$ is {\it zero} when $k_A - \omega_A + 1 \leq i \leq k_A - 1$, hence, $  \sum\limits_{i=0}^{k_A - 1} \rho_i \geq \sum\limits_{i=0}^{k_A - \omega_A} \rho_i$. Thus, in order to prove the lemma, for $m_0 = \sum\limits_{i = 0}^{k_A - 1} \tilde{\delta}_i$, we need to show that
\begin{align}
\label{eq:provehallfull}
    \sum\limits_{i=0}^{k_A - \omega_A} \rho_i \; \geq \; \textrm{min} \, (m_0 + \omega - 1, k) \; .
\end{align} 

Now we consider the following cases for each of which we show that \eqref{eq:provehallfull} is true. Note that the minimum value of $\tilde{\delta}_0 = 1$, since $m_0 \geq 1$.

{\it Case 1:} $1 \leq \tilde{\delta}_0 \leq k_B - \omega_B$. In this case, according to \eqref{eq:rhozero},
\begin{align}
\label{eq:case1}
    \rho_0 =  \omega_A \times \min ( \omega_B + \tilde{\delta}_0 - 1, k_B) = \omega_A (\omega_B + \tilde{\delta}_0 - 1).
\end{align}

{\it Case 1(a):} $\tilde{\delta}_{k_A - \omega_A + 1} = 0$. In this scenario, $\tilde{\delta}_{k_A - \omega_A + 1} = \dots = \tilde{\delta}_{k_A - 1} = 0$, hence $\sum_{i=k_A - \omega_A + 1}^{k_A - 1} \tilde{\delta}_i = 0$. Thus, 
\begin{align}
\label{eq:case1a}
    \sum\limits_{i=0}^{k_A - \omega_A} \rho_i \; & = \;     \rho_0 + \sum\limits_{i=1}^{k_A - \omega_A} \rho_i \; \nonumber \\ 
    & \geq  \omega_A (\omega_B + \tilde{\delta}_0 - 1) +  \sum\limits_{i=1}^{k_A - \omega_A} \tilde{\delta}_i + \sum\limits_{i=k_A - \omega_A + 1}^{k_A - 1} \tilde{\delta}_i \nonumber \\
    & =  \omega_A \tilde{\delta}_0 + \omega_A (\omega_B - 1) +  \sum\limits_{i=1}^{k_A - 1} \tilde{\delta}_i \nonumber \\
     & = (\omega_A - 1)\tilde{\delta}_0 + \tilde{\delta}_0 + \sum\limits_{i=1}^{k_A - 1} \tilde{\delta}_i +  \omega_A (\omega_B - 1) \nonumber \\
     & \geq \sum\limits_{i=0}^{k_A - 1} \tilde{\delta}_i + \omega_A \omega_B - 1 \, = \, m_0 + \omega - 1 \; ; 
\end{align} since $\tilde{\delta}_0 \geq 1$. Thus, \eqref{eq:provehallfull} is true in this scenario.

{\it Case 1(b):} $\tilde{\delta}_{k_A - \omega_A + 1} \geq 1$. In this scenario, $\tilde{\delta}_{k_A - \omega_A + 1} \geq \tilde{\delta}_{k_A - \omega_A + 2} \geq \dots \geq \tilde{\delta}_{k_A - 1}$, hence $\sum\limits_{i=k_A - \omega_A + 1}^{k_A - 1} \tilde{\delta}_i \leq (\omega_A - 1) \tilde{\delta}_0$. Next note that according to \eqref{eq:rhobigger}, we have $\rho_i \geq \omega_B + \tilde{\delta}_i  - 1$ for $1 \leq i \leq k_A - \omega_A$. Thus, we can say
\begin{align}
\label{eq:case1b}
    \sum\limits_{i=0}^{k_A - \omega_A} \rho_i \; & = \;     \rho_0 + \sum\limits_{i=1}^{k_A - \omega_A} \rho_i \; \nonumber \\ 
    & \geq  \omega_A (\omega_B + \tilde{\delta}_0 - 1) +  \sum\limits_{i=1}^{k_A - \omega_A} \left( \omega_B + \tilde{\delta}_i  - 1 \right) \nonumber \\
    & =  \tilde{\delta}_0 + (\omega_A - 1) \tilde{\delta}_0 + \omega_A (\omega_B - 1) \nonumber \\ & +  \sum\limits_{i=1}^{k_A - \omega_A} \tilde{\delta}_i + (k_A - \omega_A)(\omega_B - 1)\nonumber \\
    & \geq \sum\limits_{i=0}^{k_A - 1} \tilde{\delta}_i + (k_A - \omega_A)(\omega_B - 1) + \omega_A (\omega_B - 1) \nonumber \\
    & \geq \sum\limits_{i=0}^{k_A - 1} \tilde{\delta}_i + (\omega_B - 1) + \omega_A (\omega_B - 1) \nonumber \\
     & \geq \sum\limits_{i=0}^{k_A - 1} \tilde{\delta}_i + \omega_A \omega_B - 1 \, = \, m_0 + \omega - 1 \; ; 
\end{align} as we assume $\omega_A \leq \omega_B$. Thus, \eqref{eq:provehallfull} is true in this scenario too.

{\it Case 2:} $ k_B - \omega_B < \tilde{\delta}_0 \leq k_B$. In this case, according to \eqref{eq:rhozero}, we have $    \rho_0 =  \omega_A \times \min ( \omega_B + \tilde{\delta}_0 - 1, k_B) = \omega_A k_B$.

{\it Case 2(a):} $\tilde{\delta}_{k_A - \omega_A} > k_B - \omega_B$. In this scenario, $\tilde{\delta}_{0} \geq \tilde{\delta}_{1} \geq \dots \geq \tilde{\delta}_{k_A - \omega_A} = > k_B - \omega_B$. Thus, 
\begin{align}
\label{eq:case2a}
    \sum\limits_{i=0}^{k_A - \omega_A} \rho_i \; & = \;     \rho_0 + \sum\limits_{i=1}^{k_A - \omega_A} \rho_i \; \nonumber \\ & = \omega_A k_B + (k_A - \omega_A) k_B = k_A k_B = k.
\end{align} where $\rho_i = k_B$ according to \eqref{eq:rhobigger}, for $i = 1, 2, \dots, k_A - \omega_A$. Thus, \eqref{eq:provehallfull} is true in this scenario.

{\it Case 2(b):} $\tilde{\delta}_{k_A - \omega_A} \leq k_B - \omega_B$. Thus, according to \eqref{eq:rhobigger}, $\rho_{k_A - \omega_A} = \tilde{\delta}_{k_A - \omega_A} + \omega_B - 1$. Moreover, since in this scenario, we have $\tilde{\delta}_{k_A - 1} \leq \tilde{\delta}_{k_A -  2} \leq \dots \leq \tilde{\delta}_{k_A - \omega_A + 1} \leq k_B - \omega_B$; hence we have $\sum\limits_{i=k_A - \omega_A + 1}^{k_A - 1} \tilde{\delta}_i \leq (k_B - \omega_B) (\omega_A - 1)$. Now,
\begin{align}
\label{eq:case2b}
    \sum\limits_{i=0}^{k_A - \omega_A} \rho_i \; & = \;     \rho_0 + \sum\limits_{i=1}^{k_A - \omega_A - 1} \rho_i + \rho_{k_A - \omega_A}\; \nonumber \\ 
    & \geq  \omega_A k_B +  \sum\limits_{i=1}^{k_A - \omega_A - 1} \tilde{\delta}_i + \left( \tilde{\delta}_{k_A - \omega_A} + \omega_B - 1 \right)\nonumber \\
    & =  (\omega_A - 1)k_B + k_B + \omega_B - 1 +  \sum\limits_{i=1}^{k_A -\omega_A} \tilde{\delta}_i \nonumber \\
    & \geq  (\omega_A - 1)k_B + \omega_B - 1 - (k_B - \omega_B) (\omega_A - 1) \nonumber \\ 
    & + \tilde{\delta}_0 + \sum\limits_{i=1}^{k_A -\omega_A} \tilde{\delta}_i + \sum\limits_{i=k_A - \omega_A + 1}^{k_A -1} \tilde{\delta}_i \nonumber \\
    & =  \omega_A k_B - k_B + \omega_B - 1 - \omega_A k_B + \omega_A \omega_B \nonumber \\ 
    & + k_B - \omega_B  + \sum\limits_{i=1}^{k_A - 1} \tilde{\delta}_i \nonumber \\
     & \geq \sum\limits_{i=1}^{k_A - 1} \tilde{\delta}_i + \omega_A \omega_B - 1 \; . 
\end{align} Thus, \eqref{eq:provehallfull} is true in this scenario; this concludes the proof.

\end{proof}

\subsection{Proof of Claim \ref{clm:m1gw1_2}}
\label{app:proofclaim2}
\begin{proof}
Consider the edge devices in $\calW_1$. According to Alg. \ref{Alg:New_matmat}, we assign a linear combination of $\bfA_{\ell \omega_A}, \bfA_{\ell \omega_A + 1},  \dots, \bfA_{(\ell+1)\omega_A - 1} \, \left(\textrm{indices modulo} \; k_A \right)$ to edge device $W_{i}$, for $i = k, k + 1, \dots, n-1$ where $\ell = \textrm{mod}(i, k_A)$. 
In addition, we assign another linear combination of $\bfB_{m \omega_B}, \bfB_{m \omega_B + 1}, \dots, \bfB_{(m+1)\omega_B - 1} \, \left(\textrm{indices mod} \; k_B \right)$ to device $W_{i}$, for $i = k, k + 1, \dots, n-1$, where $m = \Bigl\lfloor{ \frac{i \, \omega_A}{k_A}}\Bigr\rfloor$.

Thus, the participating unknowns in edge device $W_{k_A}$ are $\bfA^T_0 \bfB_0, \dots, \bfA^T_0 \bfB_{\omega_B - 1}, \bfA_1^T \bfB_0, \dots, \dots,  \bfA_{\omega_A - 1}^T \bfB_{\omega_B - 1}$. 
Similarly, the participating submatrices in $W_{k_A + 1}$ are $\bfA^T_{\omega_A} \bfB_0, \dots, \bfA^T_{\omega_A} \bfB_{\omega_B - 1}, \bfA_{\omega_A+1}^T \bfB_0, \dots, \dots,  \bfA_{2\omega_A - 1}^T \bfB_{\omega_B - 1}$. 
In a consequence, $\omega = \omega_A \omega_B$ number of submatrices participate in each of those $s$ edge devices in a cyclic fashion. 

Now, denote the number of appearances of any submatrix $\bfA_i^T \bfB_j$ within the devices in $\calW_1$ by $\bfv_{ij} \geq 0$. Thus, for any $0 \leq i, p \leq k_A - 1$ and $0 \leq j, q \leq k_B - 1$, we have $ |\bfv_{ij}  -  \bfv_{pq}| \leq 1$, where $\sum_{i = 0}^{k_A - 1} \sum_{j = 0}^{k_B - 1}\bfv_{ij} = s \omega$. Thus, the average of these $\bfv_{ij}$'s is $\kappa = \frac{s \omega}{k}$. 
Since for every pair of $(i,j)$ and $(p,q)$, we have $|\bfv_{ij}  -  \bfv_{pq}| \leq 1$, thus, we can say that $\bfv_{ij} \geq \floor{\kappa} = \bigg\lfloor{\frac{s \omega}{k}}\bigg\rfloor$.
It indicates that within all $s$ devices of $\calW_1$, every unknown participates in at least $\floor{\kappa}$ times over $\floor{\kappa}$ distinct devices. In other words, any unknown may not participate in at most $s - \floor{\kappa}$ devices within the devices of $\calW_1$.

First, consider the case, $k = s$. Here, every unknown participates in $\floor{\kappa} = \omega$ devices, therefore, any unknown does not participate in $s - \omega$ devices. But, we choose any $m_1 \geq \omega$ devices in $\calW_1$, where $\omega = \ceil{\frac{s+1}{2}}$, since $k = s$. Thus,
\begin{align*}
   2 \omega \geq s + 1 > s \;\; \textrm{which indicates that}, \;\; \omega > s - \omega.
\end{align*}In addition, since $m_1 \geq \omega$, we claim that $m_1 > s - \omega_A$. Thus, every submatrix will participate at least once within those chosen $m_1$ devices, hence $|\calX_1| = k = k_A k_B$. 

Next, consider the other case when  $k > s$. Again, since we choose any arbitrary $m_1 \geq \omega$ devices in $\calW_1$, we are leaving $s - m_1$ devices in $\calW_1$. But 
\begin{align*}
    s - m_1 \leq s - \omega < s - \floor{\kappa}.
\end{align*} The second inequality holds since $s < k$, hence, $\floor{\kappa} < \omega$. Thus, every submatrix will participate at least once within those $m_1 \geq \omega$ devices, hence $|\calX_1| = k$.
\end{proof}

\ifCLASSOPTIONcaptionsoff
  \newpage
\fi

\bibliographystyle{IEEEtran}
\bibliography{citations}
\end{document}